\definecolor{darkgreen}{rgb}{0,0.5,0}
\definecolor{darkgray}{rgb}{0.2,0.2,0.2}
\newtheorem{theorem}{Theorem}[section]
\newtheorem{lemma}[theorem]{Lemma}
\newtheorem{corollary}[theorem]{Corollary}
\newtheorem{observation}[theorem]{Observation}
\newtheorem{proposition}{Proposition}
\newtheorem*{remark*}{Remark}
\newcommand{\R}{\mathbb{R}}
\newcommand{\bX}{\mathbf{X}}
\newcommand{\bY}{\mathbf{Y}}
\newcommand{\hll}{\hat{\ell}}
\newcommand{\ex}[1]{\mathop{{\mathbb E}\left[ #1 \right]}}
\def\polylog{\operatorname{polylog}}
\def\poly{\operatorname{poly}}
\newcommand{\lone}{\ensuremath{L_1}\xspace}
\newcommand{\out}[1]{}
\newcommand{\prob}[1]{\Pr \left[ #1 \right]}
\newcommand{\eps}{\epsilon}
\newcommand{\tpi}{\tilde{\pi}}
\newcommand{\piadd}{\pi_{\textsc{add}}}
\title{Dynamic PageRank: Algorithms and Lower Bounds} 
\author{{Rajesh Jayaram}\\ {Google Research, New York, USA}\\ \href{mailto:rkjayaram@google.com}{rkjayaram@google.com} 
\and
{Jakub Łącki}\\ {Google Research, New York, USA} \\
\href{mailto:jlacki@google.com}{jlacki@google.com}
\and
{Slobodan Mitrovi\' c\thanks{Supported by the Google Research Scholar and NSF Faculty Early Career Development Program \#2340048.}}\\{UC Davis, Davis, USA}\\
\href{mailto:smitrovic@ucdavis.edu}{smitrovic@ucdavis.edu}
\and
{Krzysztof Onak}\\{Boston University, Boston, USA} \\
\href{mailto:krzysztof@onak.pl}{krzysztof@onak.pl}
\and
{Piotr Sankowski\thanks{Partially supported by the National Science Center (NCN) grant no. 2020/37/B/ST6/04179 and the ERC CoG grant TUgbOAT no 772346.}} \\ IDEAS NCBR, University of Warsaw and \\ MIM Solutions, Warsaw, Poland \\ \href{mailto:sank@mimuw.edu.pl}{sank@mimuw.edu.pl}
}
\begin{document}

\maketitle

\begin{abstract}
We consider the PageRank problem in the dynamic setting, where the goal is to explicitly maintain an approximate PageRank vector $\pi \in \R^n$ for a graph under a sequence of edge insertions and deletions.  
Our main result is a complete characterization of the complexity of dynamic PageRank maintenance for both  multiplicative and additive ($L_1$) approximations. 

First, we establish matching lower and upper bounds for maintaining additive approximate PageRank in both incremental and decremental settings. In particular, we demonstrate that in the worst-case $(1/\alpha)^{\Theta(\log \log n)}$ update time is necessary and sufficient for this problem, where $\alpha$ is the desired additive approximation. On the other hand, we demonstrate that the commonly employed ForwardPush approach performs substantially worse than this optimal runtime. Specifically, we show that ForwardPush requires $\Omega(n^{1-\delta})$ time per update on average, for any $\delta > 0$, even in the incremental setting.

For multiplicative approximations, however, we demonstrate that the situation is significantly more challenging.
Specifically, we prove that any algorithm that explicitly maintains a constant factor multiplicative approximation of the PageRank vector of a directed graph must have amortized update time $\Omega(n^{1-\delta})$, for any $\delta > 0$, even in the incremental setting, thereby resolving a 13-year old open question of Bahmani et al.~(VLDB 2010). This sharply contrasts with the undirected setting, where we show that $\poly \log n$ update time is feasible, even in the fully dynamic setting under oblivious adversary.
\end{abstract}

\section{Introduction}
The notion of PageRank was introduced by Brin and Page 25 years ago to rank web search results~\cite{brin1998anatomy}.
Since then, computing the PageRank of a network has become a fundamental task in data mining~\cite{Wu2008}.
At a high level, PageRank is a probability distribution over the vertices of a directed graph which assigns higher probability to more ``central'' vertices; see \cref{sec:preliminaries} for a formal definition. 
We write $\pi \in \R^n$ to denote PageRank probability vector, where $\pi_i$ is the probability mass on the $i$-th vertex.
Due to its importance, it has been studied extensively in a number of computational models.
In this paper, we consider the PageRank problem in the dynamic setting, in which the goal is to maintain an approximate PageRank vector $\tilde{\pi} \in \R^n$ of a graph undergoing a sequence of edge insertions and deletions. We focus primarily on explicit maintainance of the PageRanks, meaning that the algorithm explicitly maintains $\tilde{\pi}$ in its memory contents at all time steps; we remark that all prior algoriths for the problem of maintaining all PageRanks in the dynamic setting have been of this form.


We consider \emph{three} different settings, which differ in the allowed sets of operations.
In the \emph{incremental} setting, edges can only be added to the graph.
Analogously, in the \emph{decremental} setting, edges can only be deleted.
The most general setting is the \emph{fully dynamic} setting in which we allow both types of updates.
We also consider two notions of approximation.
A \emph{$1+\alpha$ multiplicative approximation} to the PageRank vector $\pi$ is a vector $\tilde{\pi}$, such that for every vertex $v$ it holds $\tpi_v \in [(1-\alpha)\pi_v, (1+\alpha)\pi_v]$.
An \emph{additive $\alpha$ approximation} is a vector $\tilde{\pi}$ such that $\|\tilde{\pi} - \pi\|_1 \leq \alpha$.\footnote{Note that this coincides with the total variational distance between distributions.}
We note that a multiplicative guarantee is strictly stronger, as a multiplicative $1+\alpha$ approximation implies an additive $\alpha$ approximation.

Previous work on dynamic PageRank~\cite{bahmani2010fast, liao2017monte, zhan2019fast, bahmani2012pagerank, 10.1145/2701126.2701165} resulted in two main approaches to the problem.
The first one is based on sampling random walks.
Specifically, it is well-known that one can approximate PageRank by sampling $O(\log n)$ random walks of length $O(\log n)$ from each vertex in the graph (see \cref{alg:PageRank-approximation}).

In a seminal paper, Bahmani et al.~\cite{bahmani2010fast} showed that this approach can be made dynamic.
Specifically, the algorithm of Bahmani et al.~maintains a \emph{multiplicative} $1+\alpha$ approximation of incremental (or decremental) PageRank when the updates \emph{arrive in a random order}. 
However, their analysis crucially relies on the random arrival of updates, and it was not clear whether this assumption could be removed. 
The authors of~\cite{bahmani2010fast} explicitly posed the question of whether it is possible to extend their results for multiplicative approximations to the case of adversarially ordered updates; to date, this question has remained open.



The second approach to computing dynamic PageRank is a dynamic version of the ForwardPush algorithm~\cite{zhang2016approximate,aggarwal2021performance,guo2017parallel}, which is a variant of a classical local push approach proposed by \cite{andersen2006local}.
This algorithm was developed for the problem of maintaining Personalized PageRank, but can also be naturally used to maintain an additive PageRank approximation.
While this approach is highly effective in practice, no running time bounds faster than running a static algorithm from scratch after each update have been developed for maintaining PageRank using the dynamic ForwardPush method.\footnote{We note that the paper introducing the dynamic ForwardPush algorithm gives a good running time bound for running the algorithm in \emph{undirected} graphs. However, this bound only holds for computing Personalized PageRank from a \emph{uniformly random} source vertex. Even though PageRank can be reduced to Personalized PageRank, the reduction requires computing Personalized PageRank from a \emph{fixed} vertex, and so the bound does not carry over.}

Thus, despite the above line of work, many fundamental questions regarding the computational cost of maintaining PageRank in a dynamic setting remain open.
Specifically, it is still open whether there exists an algorithm for maintaining a approximation to PageRank in $o(n)$ time per update. 
This question is open even if one considers only incremental or decremental updates, or  if one allows additive approximation.
In this paper, we answer each of these open questions. More precisely, we characterize the complexity of solving the dynamic PageRank problem in each of these settings by providing new upper and lower bounds.




\subsection{Our contributions}

We provide new lower and upper bounds on the complexity of explicitly maintaining an approximate PageRank vector both under additive and multiplicative approximation.
Throughout this section, we use $n$ to denote the number of vertices in a graph, $m$ to denote the number of edges and $\epsilon$ to denote the jumping probability used to define PageRank.\footnote{The probability of not-jumping (in our notation, $1-\epsilon$) is sometimes called the \emph{damping factor} of PageRank.}

\subsubsection{Additive Approximation}
We provide (essentially) matching lower and upper bounds for explicitly maintaining additive approximation of PageRank in both incremental and decremental setting.

\begin{restatable}{theorem}{lbadditive}\label{thm:LB-additive}
Fix $\epsilon \in (0.01, 0.99)$. For any sufficiently large $n \geq 1$ and any $\alpha$ such that $1/\alpha = n^{o(1 / \log \log n)}$, any algorithm which explicitly maintains $\alpha$-additive approximation of PageRank must run in $n \cdot (1/\alpha)^{\Omega(\log \log n)}$ total time.
\end{restatable}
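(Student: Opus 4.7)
The plan is to establish the lower bound via a hierarchical ``PageRank amplifier'' gadget combined with an adversarial incremental update sequence and an amortized charging argument. We construct recursively defined gadgets $G_0, G_1, \ldots, G_d$ of depth $d = \Theta(\log \log n)$, where $G_i$ is built by taking $\Theta(1/\alpha)$ essentially disjoint copies of $G_{i-1}$ and connecting them through a small ``trigger'' subgraph. With constant jumping probability $\epsilon$, PageRank flows geometrically along directed paths, so each trigger insertion at level $i$ redirects a controlled amount of PageRank mass into the corresponding copies, causing $\Omega((1/\alpha)^i)$ vertices to have their PageRanks change by nonnegligible amounts. The final gadget $G_d$ has $s = (1/\alpha)^{\Theta(\log \log n)}$ vertices, which is $n^{o(1)}$ under the assumption $1/\alpha = n^{o(1/\log \log n)}$.

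To obtain the $n \cdot (1/\alpha)^{\Omega(\log \log n)}$ total-time bound, I would tile $K = n/s$ disjoint copies of $G_d$ into the base graph and specify an incremental update sequence of length $\Theta(n)$ that repeatedly performs trigger insertions at all recursion levels inside each gadget. For each update, we show that the true PageRank vector $\pi$ changes at $\Omega(s)$ distinct coordinates by amounts whose aggregate $L_1$-drift across the sequence is large. Since the algorithm must maintain $\|\tilde{\pi} - \pi\|_1 \leq \alpha$ at every step and each write can only absorb a bounded amount of per-coordinate drift, a standard charging argument then forces $\Omega(s) = \Omega((1/\alpha)^{\log \log n})$ writes per update on average, yielding the claimed total bound. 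The incremental restriction is handled by designing each ``trigger'' to be a fresh edge insertion into a prepared scaffold (rather than a toggle), so that the entire hard sequence can indeed be realized as an insertion-only stream.

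The main technical obstacle is the precise PageRank analysis of the recursive gadget. I need to show that the perturbation of $\pi$ induced by a trigger insertion at level $i$ is supported on $\Omega((1/\alpha)^i)$ coordinates with nearly uniform magnitude, and further that perturbations at different levels (and across different gadgets) are essentially linearly independent, so that the algorithm cannot satisfy many of them using a single shared write. Concretely, this amounts to solving, or at least tightly bounding, the linear system $\pi = \epsilon u + (1-\epsilon)\pi P$ on the gadget and tracking how rank-one changes to the transition matrix $P$ propagate through the recursive structure; the non-trivial part is controlling second-order effects of earlier insertions so that later perturbations are not ``masked'' by residual mass redistribution. A secondary subtlety, already alluded to, is ensuring the scaffold supports enough distinct fresh insertions per gadget to realize the full $\Theta(s)$ operations demanded by the accounting, while the scaffold itself contributes only $O(1)$-factor overhead to the PageRank calculations.
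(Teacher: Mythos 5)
Your proposal takes a genuinely different route from the paper, and it contains a gap that I believe is fatal rather than merely a missing calculation.

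The paper's construction is \emph{not} a tiling of small recursive gadgets. It builds a single global instance: a directed tree $H$ of depth $d=\Theta(\log(1/\alpha))$ with branching $t = \Theta(\log n)$ (using geometrically weighted parallel edges so that newly completed paths carry almost all of the downstream mass), a set $R$ of $\Theta(n)$ mass sources feeding the root, and two large directed stars $S_0, S_1$, each on $\Theta(n)$ vertices, into which the leaves of $H$ alternately feed. Each time a root-to-leaf path is completed, $\Theta(1)$ probability mass is rerouted from $S_{(j+1)\bmod 2}$ into $S_{j\bmod 2}$, so the PageRank of $\Theta(n)$ vertices changes by at least $100\alpha/n$. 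An auxiliary averaging lemma (\cref{lem:coordinates}) then forces $\Omega(n)$ writes for each of the $t^d = (1/\alpha)^{\Theta(\log\log n)}$ leaves. So the $\log\log n$ exponent comes from tree \emph{counting}, while the \emph{target} set that must be rewritten has size $\Theta(n)$ throughout.

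In your proposal, the target set lives inside a single gadget $G_d$ of size $s = (1/\alpha)^{\Theta(\log\log n)}$. Under the theorem's hypothesis $1/\alpha = n^{o(1/\log\log n)}$, this means $s = n^{o(1)}$ and in particular $s \ll \alpha n$. Because your tiled gadgets are disjoint, each gadget generates and retains only $s/n$ total probability mass, so \emph{any} sequence of edge insertions inside one gadget moves at most $O(s/n)$ $L_1$-mass of $\pi$. Since $s/n \ll \alpha$, a single update never produces $\|\pi^{\text{before}}-\pi^{\text{after}}\|_1 > 2\alpha$, which is a necessary condition for the algorithm to be forced to change $\tilde\pi$ at all. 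Worse, the algorithm can permanently assign each vertex $v$ inside a gadget an estimate equal to the time-average of $\pi_v$ over the entire update sequence; the per-coordinate oscillation is at most $O(s/n)$ (and $O(1/n)$ if the mass is spread across the $s$ vertices), so the accumulated error over the gadget stays $O(s/n)$, and summing over all $n/s$ gadgets gives total error $O(1)$, not $O(\alpha)$ — which at first sounds like good news for you. But the adversary only issues one insertion per step, so at any fixed time the algorithm needs $\tilde\pi$ to be $\alpha$-close to the \emph{current} $\pi$, and it can reallocate budget across gadgets as the sequence proceeds. Working this accounting out gives at best $O(s^2/\alpha)$ forced writes over the whole $\Theta(n)$-step sequence, which is $n^{o(1)}$ — exponentially short of the claimed $n\cdot (1/\alpha)^{\Omega(\log\log n)}$. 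The drift-charging argument you gesture at cannot close this gap, because bounded-range oscillation of a coordinate can always be absorbed once and for all by a single appropriately-placed estimate; what is needed is that a constant fraction of \emph{all} $n$ coordinates move simultaneously by $\Omega(\alpha/n)$, which is exactly what $S_0, S_1$ of size $\Theta(n)$ achieve and what a size-$n^{o(1)}$ gadget structurally cannot.

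A secondary problem is the claim that ``for each update, $\pi$ changes at $\Omega(s)$ distinct coordinates.'' In the recursive gadget, the overwhelming majority of trigger insertions live at low recursion levels: there are $\Theta((1/\alpha)^{d-i})$ level-$i$ triggers and each affects only $(1/\alpha)^i$ vertices, so the total coordinate-change count across the whole gadget is $\Theta(ds) = \Theta(s\log\log n)$, and averaged per insertion it is $\Theta(\log\log n)$, not $\Theta(s)$. This bookkeeping alone already caps the total forced work at $O(n\log\log n)$ even before the budget-allocation issue above. The paper avoids this counting trap by making the leaves of $H$ all feed into the \emph{same} $\Theta(n)$-sized stars, so the expensive rewrite is not tied to the size of the subtree below the newly-completed leaf but to the fixed large target. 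If you want to salvage your plan, you essentially need to give up tiling and make a single instance with a $\Theta(n)$-sized target — at which point the key design question becomes how to generate $(1/\alpha)^{\Theta(\log\log n)}$ distinct ``mass reroute'' events using only $O(n)$ edge insertions with $O(n)$ total edges, and the answer is precisely the paper's branching factor $\Theta(\log n)$, depth $\Theta(\log(1/\alpha))$ tree with geometric parallel-edge weights.
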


Our lower bound, which we prove in \cref{sec:LB-additive-approx}, is obtained by constructing a graph and an update sequence for which the PageRank vector undergoes a large number of significant changes.
The changes to the vector are large to the point that even an approximate PageRank vector must be often updated in linear time.
We note that the lower bound, and all other lower bounds that we state, applies to the setting when the PageRank vector is maintained explicitly, i.e., after each update algorithm outputs the changes that the PageRank vector undergoes.

We note that it is easy to come up with an example in which a single edge update significantly changes the PageRanks of a large fraction of vertices (see \cref{fig:worst-case}).
This immediately rules out efficient incremental and decremental algorithms that maintain approximate PageRank with \emph{worst-case} update time guarantees.
This also rules out fully dynamic algorithms with amortized update time guarantees.
However, proving a strong lower bound for the \emph{amortized} update time bound in the incremental or decremental setting is far more involved, as it requires showing a long sequence of updates in which, on average, every edge insertion (or deletion) changes the PageRank of many vertices.

We complement our lower bound with the following algorithmic result proved in \cref{sec:improved-bounds}.

\begin{restatable}{theorem}{directedadditiveup}
\label{thm:bounded-length}
For any $\epsilon \in (0, 1)$, there is an algorithm that with high probability explicitly maintains an $\alpha$ additive approximation of PageRank of any graph $G$ in either incremental or decremental setting.
The algorithm processes the entire sequence of updates in $O(m) + n \cdot (1/\alpha)^{O_\eps(\log \log n)}$ total  time and works correctly against an oblivious adversary.
\end{restatable}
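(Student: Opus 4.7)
The plan is to maintain, from each vertex $v$, a collection of $R = \tilde{\Theta}(1/\alpha^2)$ random walks truncated at length $L = O_\epsilon(\log(1/\alpha))$, and to estimate $\tilde{\pi}_v$ as the appropriately weighted empirical frequency of walk positions at $v$. Chernoff concentration together with the $(1-\epsilon)^L = O(\alpha)$ truncation error guarantees an $\alpha$-additive approximation with high probability against the oblivious adversary at every time step. Initializing all walks costs $O(m + nRL)$, which fits within the target bound.

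On an edge insertion $(u,v)$ making $d^+(u) = d$, I apply the following local rewiring: for every walk position whose predecessor is $u$, independently with probability $1/d$, redirect the walk to $v$ and resample the remaining $\leq L$-step suffix as a fresh truncated walk from $v$. A direct coupling verifies that this preserves the joint distribution of freshly sampled walks in the new graph. The decremental case is symmetric: upon deletion of $(u,v)$, every walk position currently traversing $(u,v)$ is redirected to a uniformly random remaining out-neighbor of $u$ and its suffix resampled. Explicit maintenance of $\tilde{\pi}$ adds only $O(1)$ bookkeeping per changed walk position, absorbed in the main cost.

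The crux is a recursive cascade bound. Let $N_i$ denote the expected number of times the $i$-th position of a fixed walk is rerouted over the entire update sequence. Between two consecutive reroutes of position $i-1$, position $i$ sits at some fixed vertex $u$ whose out-degree grows from some $d_{\text{start}}$ to at most $n$, during which the sum of reroute probabilities at position $i$ is at most $H(n) - H(d_{\text{start}}) = O(\log n)$. This yields the recursion $N_i \leq O(\log n)(N_{i-1} + 1)$, whence $N_i \leq O(\log n)^i$. Summing the resample work $O(L) \cdot \sum_{i=1}^{L} N_i = O\bigl(L^2 (\log n)^L\bigr)$ over all $nR$ walks and using the arithmetic $(\log n)^L = \exp(L \log\log n) = (1/\alpha)^{O_\epsilon(\log\log n)}$, the cumulative work past initialization is $n \cdot (1/\alpha)^{O_\epsilon(\log\log n)}$.

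The main technical obstacle is formalizing the recursion in the presence of correlations between reroutes at different depths and among the new positions spawned by suffix resamples. I would handle this by conditioning on the entire history of position $i-1$: given that history, the reroute indicators at position $i$ are independent Bernoullis with probabilities $1/d$ for the sequence of out-degrees $d$ that $u$ passes through, so the harmonic bound applies unconditionally after integrating over the history. A Markov plus union bound over the $O(m)$-long update sequence finally upgrades the expected running-time bound into the high-probability guarantee, matching the theorem's claim against an oblivious adversary.
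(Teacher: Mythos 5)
Your proposal follows essentially the same approach as the paper: truncate each PageRank walk to length $L = O_\eps(\log(1/\alpha))$ (the paper's Lemma on constant-length walks), maintain them under updates by rejection sampling (the paper's \cref{alg:walk-insertion} and \cref{alg:walk-removals}), and bound the total rerouting work via a depth-$\ell$ harmonic-sum recursion (the paper's \cref{lem:reroute}, which gives $f(\ell) \le (\ln n)^\ell$). Your recursion $N_i \le O(\log n)(N_{i-1}+1)$ is a slightly more careful version of the paper's $f(\ell) \le \ln n \cdot f(\ell-1)$, accounting for the initial interval, but yields the same $(\log n)^L = (1/\alpha)^{O_\eps(\log\log n)}$ bound. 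Two places where you are more terse than the paper: the paper devotes a full lemma (\cref{lemma:removals-maintain-random-walks}) to verifying that the rerouting ``coupling'' preserves the walk distribution, a point you assert but do not argue; and your final ``Markov plus union bound'' does not actually upgrade an expected-time bound to a high-probability one (Markov alone gives constant failure probability) --- though the theorem only requires high probability for the approximation guarantee, not the running time, so this over-claim is harmless.
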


Furthermore, we study the complexity of the dynamic ForwardPush algorithm~\cite{zhang2016approximate}.
This algorithm, when run with parameter $\tilde{\alpha}$ maintains an $\tilde{\alpha} \cdot m$ additive approximation to PageRank (and so to obtain $\alpha$ additive approximation, one needs to use $\tilde{\alpha} = \alpha / m$).
By using a similar construction of a hard instance, we show that the algorithm takes $\Omega(n^{2-\delta})$ time, for any $\delta > 0$, to handle a sequence of $O(n)$ operations, even in incremental or decremental settings (see \cref{thm:fp}).

\subsubsection{Multiplicative Approximation}
Our next result is a lower bound showing that any algorithm explicitly maintaining a constant multiplicative approximation to PageRank, even in the incremental or decremental setting, must in the worst case take $\Omega(n^{2-\delta})$ total time, for any $\delta > 0$, to process a sequence of $n$ updates to an $n$-vertex graph. Specifically, we prove the following in \cref{sec:LB-multiplicative-approx}:

\begin{restatable}{theorem}{thmmub}\label{thm:lb-explicit}
There exists a sequence of $\Theta(n)$ edge insertions applied to an initially empty graph on $n$ vertices for which the following holds. For any constant $\delta > 0$, any algorithm that maintains a vector $\tilde{\pi} \in \R^n$ such that $(1/2) \pi_v < \tilde{\pi}_v \leq 2 \pi_v$ at all time steps, must take time $\Omega(n^{2-\delta})$ to process the sequence.
In particular, the amortized update time of any such algorithm is $\Omega(n^{1-\delta})$. 
\end{restatable}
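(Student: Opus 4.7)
The plan is to prove the bound by exhibiting an adversarial construction: a graph $G$ on $n$ vertices and a sequence of $\Theta(n)$ edge insertions for which the true PageRank vector undergoes $\Omega(n^{2-\delta})$ \emph{forced updates}. A forced update at a vertex $v$ is a moment when $\pi_v$ has moved by a multiplicative factor greater than $4$ since the last time $\tilde{\pi}_v$ was set, so that the old approximation is no longer valid and the algorithm must spend $\Omega(1)$ time outputting a new value. The key exploit is that multiplicative tracking is sensitive all the way down to the teleport floor $\epsilon/n$, so even extremely small PageRanks contribute to the lower bound when they oscillate by constant factors.

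The building block I would use is a \emph{broadcast amplifier}: a source $s$ of PageRank $\Theta(k/n)$ with out-degree $k$ into $k$ target vertices. With $\pi(s)$ at this level, each target's PageRank sits at a small but constant multiple of the teleport floor. Two primitive triggers produce a factor-$\Omega(1)$ swing at all $k$ targets with a single edge insertion: an \emph{up-trigger} adds a new in-edge to $s$ from a freshly prepared predecessor whose own PageRank is $\Theta(k/n)$, approximately doubling $\pi(s)$; a \emph{down-trigger} adds an out-edge from $s$'s dominant in-predecessor to a disjoint sink, halving the contribution that reaches $s$. Each trigger is a single edge but produces $k$ forced updates at the targets.

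Packing many amplifier gadgets into a single $n$-vertex graph and scheduling the $\Theta(n)$ insertions so that each is a trigger for some gadget would yield $\Theta(n) \cdot k$ forced updates; choosing $k = n^{1-\delta}$ gives the $\Omega(n^{2-\delta})$ bound. The ingredients are disjoint target supports across gadgets (to avoid over-counting), a shared scaffolding subgraph that efficiently supplies the prepared predecessors needed by up-triggers, and a careful interleaving of up- and down-triggers within each gadget so that $\pi(s)$ can be moved back and forth across the sequence rather than monotonically climbing out of range.

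The main obstacle is the scaffolding: each prepared predecessor must have PageRank $\Theta(k/n) = \Theta(n^{-\delta})$, and this in-flow must itself originate somewhere, all within the $\Theta(n)$ edge budget. A naive scaffold built from teleport-level feeders burns through the budget. The plan is to build the scaffold recursively (a multi-level amplifier of amplifiers), trading levels of recursion for smaller per-vertex fan-in; the number of levels, chosen as a function of $\delta$, should let the construction fit within $O(n)$ edges while still producing $\Theta(n)$ independent triggers. The heart of the analysis will be a perturbation bound on the global PageRank equation showing that triggering one gadget perturbs other gadgets by a factor much smaller than the factor-$4$ signal at the triggered targets, together with a clean charging scheme that attributes every one of the $\Omega(n^{2-\delta})$ forced updates to a unique (insertion, target) pair.
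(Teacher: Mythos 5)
Your overall direction is right: an adversarial incremental sequence that repeatedly forces a constant-factor swing in the PageRank of many vertices, paired with some recursive scaffolding to keep the edge budget linear. But the specific gadget design has a resource-counting problem that the paper's construction cleverly avoids, and the missing piece is exactly where the difficulty lies.

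The problematic step is the combination of ``disjoint target supports across gadgets'' with ``$\Theta(n)$ triggers.'' If each trigger must have its own $k = n^{1-\delta}$ disjoint targets, you have at most $n/k = n^{\delta}$ gadgets and hence $n^{\delta}$ triggers, not $\Theta(n)$. If instead you want to re-trigger the same gadget many times using alternating up- and down-triggers, each up-trigger consumes a freshly prepared predecessor of PageRank $\Theta(k/n)$, and preparing such a predecessor costs $\Theta(k)$ edges of scaffolding (or else it sits at the $\epsilon/n$ teleport floor). Over $T$ triggers at one gadget that is $\Theta(Tk)$ edges, so $\Theta(n)$ total triggers at any gadgets costs $\Theta(nk) = \Theta(n^{2-\delta})$ edges --- far over budget. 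The recursive scaffold helps lower the fan-in per vertex but does not change the fundamental accounting: every up-trigger still needs \emph{fresh} supply of $\Theta(k/n)$ probability mass, and in the teleport model new mass cannot be created, only re-routed.

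The paper's construction resolves both issues simultaneously by \emph{reusing} a single mass reservoir ($n/4$ sources $R$ feeding the root of a directed tree $H$) and only \emph{two} target stars $S_0, S_1$, and by introducing a mechanism you have not identified: parallel directed edges with exponentially increasing multiplicity $p^i$ to the $i$-th child. Edges of $H$ are inserted in pre-order so that, at every internal node, the rightmost (newest) child receives a $1-1/p$ fraction of the outgoing edges and hence absorbs almost all of the mass from $R$. Completing a root-to-leaf path $\ell_j$ therefore silently \emph{steals} the mass that previously flowed to $\ell_{j-1}$ without any dedicated down-trigger edge, and since consecutive leaves feed $S_0$ and $S_1$ in alternation, the leaves of both stars swing by a $\Theta(\log n)$ factor every time a new leaf becomes reachable. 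No per-trigger predecessor scaffold is needed; the $R \to H$ scaffold is paid for once. The parameters $t = \Theta(\delta \log n/\log\log n)$, $p = \log^2 n$, $d = \Theta(\log n/\log\log n)$, $|S_i| = n^{1-2\delta}$ then give $\Theta(n^{1-2\delta})$ leaves each triggering $\Theta(n^{1-2\delta})$ forced updates while keeping $|E| = O(n)$. This shared-reservoir-plus-parallel-edges idea is the crux, and without it your gadget-packing plan cannot be made to fit.
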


We note that, by symmetry, the above theorem also applies to the decremental setting.

\cref{thm:lb-explicit} gives a negative resolution to the 13-year-old open question of Bahmani et al.~\cite{bahmani2010fast}, who asked whether their polylogarithmic update time bounds for maintaining PageRank under a sequence of updates coming in \emph{random} order can be extended to the general case.
Previously, the only negative results for this problem were given by Lofgren~\cite{DBLP:journals/ipl/Lofgren14} who showed that the specific algorithm of Bahmani et al.~requires $\Omega(n^c)$ update time for some $c \in (0, 1)$, but this did not rule out the existence of a better algorithm.
We extend this lower bound to \emph{every} algorithm which explicitly maintains an approximate PageRank vector, and strengthen the bound from $\Omega(n^c)$ to $\Omega(n^{1-\delta}$) for any $\delta>0$.

To complement the above lower bound, in \cref{sec:upper-bound-undirected}, we give a simple analysis of the Bahmani et al.~algorithm in \emph{undirected} graphs, and show that in this case maintaining multiplicative approximation can be done in polylogarithmic time per update even in the fully dynamic setting.
This algorithm also assumes an oblivious adversary.
While the analysis is based on a simple observation, to the best of our knowledge it has not been explicitly given before.

\begin{restatable}{theorem}{thmundirected}
\label{thm:undirected}
For any $\epsilon \in (0, 1)$, there is an algorithm that with high probability explicitly maintains a $1+\alpha$ multiplicative approximation of PageRank of any undirected graph $G$ in the fully dynamic setting.
The algorithm handles each update in $O(\log^5 n / (\epsilon^2 \alpha^2))$ time and works correctly against an oblivious adversary.
\end{restatable}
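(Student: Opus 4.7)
\textbf{Proof proposal for Theorem~\ref{thm:undirected}.} The plan is to analyze the Bahmani et al.\ algorithm directly in the undirected setting. For each vertex $u$ we maintain a set of $R = \Theta(\log n / (\epsilon \alpha^2))$ independent random walks of geometric length with termination probability $\epsilon$, truncated at $L = O(\log n/\epsilon)$ steps (the tail probability $(1-\epsilon)^L$ is $1/\poly(n)$, so truncation is negligible). We explicitly maintain $\tilde\pi_v := (\text{number of walks terminating at } v)/(nR)$; whenever a walk's endpoint changes we perform one decrement and one increment. Accuracy of $\tilde\pi_v$ is a standard Chernoff argument: the number of walks ending at $v$ is a sum of $nR$ independent Bernoullis with total mean $nR\pi_v$, and $\tilde\pi_v \in [(1-\alpha)\pi_v, (1+\alpha)\pi_v]$ holds with probability $1 - 1/\poly(n)$ provided $nR\pi_v = \Omega(\log n/\alpha^2)$. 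The universal bound $\pi_v \geq \epsilon/n$ (from pure teleportation, valid in any graph) justifies the choice of $R$, and a union bound over vertices and over the $\poly(n)$ updates closes the accuracy argument; the oblivious-adversary assumption ensures that the update sequence is independent of the walks' randomness.

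The simple observation powering the update-time analysis in the undirected case is the reciprocity of personalized PageRank: time-reversibility of the random walk with respect to $\deg(\cdot)/(2m)$ gives $\deg(u)\,\pi^{(u)}_v = \deg(v)\,\pi^{(v)}_u$, where $\pi^{(u)}$ denotes personalized PageRank with teleport at $u$. Combined with $\pi_v = \frac{1}{n}\sum_u \pi^{(u)}_v$, this yields
\[
\pi_v \;=\; \frac{\deg(v)}{n}\sum_u \frac{\pi^{(v)}_u}{\deg(u)} \;\leq\; \frac{\deg(v)}{n},
\]
since $\sum_u \pi^{(v)}_u = 1$ and $\deg(u)\geq 1$. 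This bound is the only undirected-specific fact we use.

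For the per-update cost, when edge $(u,v)$ is inserted or deleted, only walks currently visiting $u$ or $v$ are affected: their next-step distribution at those vertices changes from uniform on $\deg(u)$ neighbors to uniform on $\deg(u) \pm 1$ neighbors. A standard coupling lets us re-sample a walk's suffix only with probability $O(1/\deg(u))$ per visit to $u$ (and analogously for $v$). The expected number of visits to $u$ across all walks is $nR\pi_u/\epsilon$, which by the observation above is at most $R\deg(u)/\epsilon$; multiplying by the $O(1/\deg(u))$ re-sampling rate gives $O(R/\epsilon)$ expected suffix-regenerations per update, each of cost $O(L)$ steps plus $O(\log n)$ overhead for a per-vertex dictionary indexing which walk-steps currently visit it. Chernoff concentration plus a union bound over the fixed adversarial update sequence converts the expectation into a high-probability worst-case per-update bound, yielding the claimed $O(\log^5 n/(\epsilon^2 \alpha^2))$ time after collecting polylog factors.

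The main obstacle I expect is bookkeeping rather than insight: carefully maintaining the per-vertex dictionaries of active walk-steps, cleanly handling walks that reach the truncation length $L$, and rigorously verifying the high-probability per-update bound. The oblivious-adversary assumption is essential here---it makes the update sequence independent of the walks' randomness, so standard Chernoff plus union-bound arguments convert expected-case per-update bounds into worst-case high-probability guarantees across all $m$ edges and $\poly(n)$ updates.
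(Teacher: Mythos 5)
Your proposal is correct and uses essentially the same idea as the paper, but the key undirected-specific fact is packaged through a slightly different lens. The paper isolates an ``edge-load'' lemma (\cref{lem:load} and \cref{corollary:frequency-of-e}): because undirected random walks preserve the degree-proportional distribution, each edge is traversed by only $O(t\log^2 n/\eps)$ of the $nt$ maintained walks with high probability, which directly bounds the number of walks that must be re-routed when an edge is inserted or deleted. You instead invoke reciprocity of personalized PageRank, $\deg(u)\,\pi^{(u)}_v = \deg(v)\,\pi^{(v)}_u$, to get $\pi_v \le \deg(v)/n$, then translate this into an expected count of walk-visits to an endpoint and finally into an expected re-sampling count of $O(R/\eps)$, with a Chernoff/union-bound step at the end to get a high-probability per-update bound. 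These are the same underlying observation — degree-proportionality of visits on undirected graphs — arranged in the opposite order: the paper proves concentration once (on edge load) and reads off the update cost, while you reason about expectations first and concentrate at the end. Both are sound. Two small points of bookkeeping: (i) your estimator (fraction of walks terminating at $v$) differs from Algorithm~\ref{alg:PageRank-approximation}'s (normalized visit count $\bX_v$); both are unbiased and admit the same Chernoff argument with $R=\Theta(\log n/(\eps\alpha^2))$ thanks to $\pi_v \ge \eps/n$, so this is immaterial. (ii) Your tally of factors — $O(R/\eps)$ re-samplings, each costing $O(L\log n)$ — comes out to $O(\log^3 n/(\eps^3\alpha^2))$ in expectation rather than the stated $O(\log^5 n/(\eps^2\alpha^2))$; the two bounds are incomparable, but both fall under the polylogarithmic-update claim, and the discrepancy is a matter of how many $\log n$ versus $1/\eps$ factors one charges for the data-structure overhead and the high-probability truncation/concentration steps, not a gap in the argument.
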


It is an open question whether it is possible to design dynamic PageRank algorithms that bypass our lower bounds, for example, by not maintaining PageRank explicitly or looking beyond worst-case bounds and studying restricted graph classes.

\subsection{Related Work}
The dynamic PageRank problem has been studied in a number of recent works~
\cite{bahmani2010fast,bahmani2012pagerank,liao2017monte,zhan2019fast, chakrabarti2007dynamic, pathak2008index, rossi2012dynamic, lofgren2016personalized, sahu2022dynamic, sallinen2023real, guo2017parallel} studying both the theoretical and empirical aspects of the problem.
One line of study considered the incremental and decremental settings with updates performed in \emph{random} order~\cite{bahmani2010fast,zhan2019fast} and obtained algorithms that achieve $O(\log n / \eps)$ update time. The result of \cite{zhan2019fast} is applicable in a non-random order as well, although in that case it requires $\Omega(d_v)$ running time per update done on a vertex $v$ of degree $d_v$. Bahmani et al.~\cite{bahmani2012pagerank} analyze their algorithm in a random graph model in which high PageRank vertices are more likely to receive new neighbors.
We note that attempts at designing faster algorithms have been undertaken in \cite{liao2017monte} as well as \cite{zhan2019fast}. However, these algorithms come with no provable approximation guarantees. 

Another line of work~\cite{andersen2007local,lofgren2013personalized,lofgren2014fast,zhang2016approximate,wang2020personalized} focuses on computing Personalized PageRank, which is PageRank computed from the point of view of a single vertex. For instance, \cite{lofgren2014fast} show that if each entry of a Personalized PageRank is lower-bounded by $\delta$, then the Personalized PageRank of a vertex can be approximated in time $O(\sqrt{d / \delta})$, where $d$ is the average graph degree.

Finally, PageRank was also studied in the context of sublinear algorithms~\cite{bressan2018sublinear, stoc24}.
For instance, for a graph on $m$ edges and omitting $\poly$ dependence on $\log m$ and $\alpha^{-1}$, the very recent algorithm presented in \cite{stoc24} requires $O(n^{1/2} \cdot \min\{m^{1/4}, \Delta^{1/2}\})$ running time for approximating the PageRank of a single vertex, where $\Delta$ is the maximum degree in the graph.


\subparagraph{Further related work.}
Maintaining random walks and studying their properties in the context of dynamic graphs has been done for a number of problems. 
In addition to PageRank, the examples include distributed algorithms~\cite{augustine2016distributed}, vertex sparsifiers~\cite{durfee2019fully,van2022faster}, max-flow~\cite{van2022faster}, and cover and mixing time~\cite{avin2018cover,sauerwald2019random}.

\subsection{Impossibility of Non-Trivial Worst-Case Bounds}\label{sec:impossibility}
\begin{figure}
    \centering
    \includegraphics[scale=0.7]{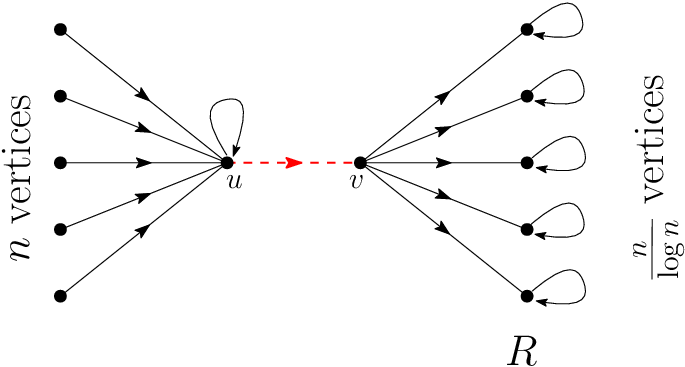}
    \caption{An example illustrating that maintaining multiplicative approximation or even an $\lone$ approximation of PageRank in the worst case requires $\Theta(n / \log n)$ running time even after a single deletion/insertion of edge $uv$.
    For details, see \cref{sec:impossibility}.}
    \label{fig:worst-case}
\end{figure}
A wealth of literature on designing dynamic algorithms for approximate PageRank, including our results, focuses on \emph{amortized} running time complexity.
It is natural to wonder whether non-trivial worst-case update running times do not exists due to lack of techniques or due to fundamental reasons. 
As our example in \cref{fig:worst-case} illustrates, non-trivial update running times are not possible even on very sparse graphs and even if one's goal is to maintain an $L_1$-approximate PageRank vector.

Namely, on the one hand, for the graph $G$ in \cref{fig:worst-case}, it can be shown that $\pi_u, \pi_v \in \Omega(\eps)$ and $\pi_x \in \Omega((\log n) / n)$ for each vertex $x \in R$. 
On the other hand, consider graph $G'$ obtained from $G$, i.e., from the graph in \cref{fig:worst-case}, by removing the red-dashed $(u, v)$ edge, and let $\pi'$ be the PageRank of $G'$. It is not hard to show that $\pi'_u \in \Omega(1)$, $\pi'_v \in O(\eps / n)$ and $\pi'_x \in O(1/n)$ for each $x \in R$. This example illustrates the following: there exists a directed graph in which after \emph{a single} edge removal one has to update $\Omega(n / \log n)$ vertices if the goal is to maintain a multiplicative and even an $L_1$ approximation of the PageRank for sufficiently small constant $\eps$.
Moreover, if random walks are used to estimate the PageRank -- which to the best of our knowledge is the only other used approach than Power method -- then maintaining an additive or multiplicative approximate PageRank of a single vertex still requires $\Omega(n)$ worst-case time. To see that, observe that there are $\Theta(n)$ times more random walks passing through $v$ in $G$ than in $G'$.

\subsection{Organization of the Paper}
The rest of this paper is organized as follows.
In \cref{sec:preliminaries} we formally define PageRank and review a random-walk based algorithm for approximating it in the static setting.
In \cref{sec:lower-bound-insertion-only} we give the lower bounds on the time required to explicitly maintain PageRank and on the running time of the dynamic ForwardPush algorithm.
\cref{sec:random-walks} reviews the algorithm for approximating PageRank by maintaining random walks.
While the algorithm is essentially the same as the algorithm by Bahmani et al.~\cite{bahmani2010fast}, we present a full analysis, since the previous papers on dynamic PageRank did not prove the correctness of this approach.
In the following two sections we analyze this algorithm in two settings.
First, in \cref{sec:improved-bounds} we show that this algorithm achieves near-optimal update time while maintaining additive approximation to PageRank.
Second, in \cref{sec:upper-bound-undirected} we present a simple analysis showing that in undirected graphs maintaining even a constant multiplicative approximation to PageRank in the fully dynamic setting is possible with polylogarithmic update time.
\section{Preliminaries}
\label{sec:preliminaries}

We begin by defining the PageRank of a directed graph $G = (V,E)$. Formally, the PageRank of $G$, denoted by $\pi \in \R^n_{\geq 0}$, is the stationary distribution of a random walk on $G$, where at each step the walk \textit{jumps} to another uniformly random vertex with probability $\eps \in (0,1)$. The jump probability $\eps$ is a parameter, which we will fix for the remainder. If $\deg(i)$ is the out-degree of the $i$-th vertex in $V$, then the corresponding non-symmetric transition matrix $M \in \R^{n \times n}$ has entries $M_{i,j} = \frac{\eps}{n} + (1-\eps)\frac{1}{\deg(i)}$ if $(i,j) \in E$, and $M_{i,j} = \eps/n$ otherwise. We make the standard assumption (required for PageRank to be well-defined) that each vertex has $\deg(i) \geq 1$, which can be accomplished by adding self-loops. 

PageRank can be approximated by sampling $O(\log n / \eps)$ relatively short random walks from each vertex. One such approach is provided as \cref{alg:PageRank-approximation}, for which the following can be shown.
\begin{proposition}[\cite{bahmani2010fast,lkacki2020walking}]\label{prop:walksEstimate}

Let $\pi$ be the PageRank vector of a graph $G$.
The estimate $\tilde{\pi}$ computed by \cref{alg:PageRank-approximation} (with $\ell = \infty$) satisfies (a) for all $v \in V$ we have $\ex{\tilde\pi_v} = \pi_v$, and (b) with probability $1-1/\poly(n)$, simultaneously for all $v \in V$, we have $\tilde{\pi}_v= (1 \pm \alpha)\pi_v$.
\end{proposition}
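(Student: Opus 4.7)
The plan is to realize $\tilde\pi_v$ as an empirical average of $nR$ independent Bernoullis, compute its expectation through the Neumann-series expansion of PageRank, and finish with a multiplicative Chernoff bound together with a union bound over vertices.

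First, I would fix the interpretation of the algorithm in the $\ell = \infty$ regime: for every vertex $u$, we sample $R = \Theta(\log n / (\eps\alpha^2))$ independent random walks, each of which at every step either terminates (with probability $\eps$) or steps to a uniformly random out-neighbor (with probability $1-\eps$). Let $P$ denote the row-stochastic out-neighbor transition matrix, and let $X_{u,r}^{(v)} \in \{0,1\}$ indicate that the $r$-th walk starting at $u$ terminates at $v$. By construction the walks are mutually independent, and
\[
\tilde\pi_v \;=\; \frac{1}{nR}\sum_{u \in V}\sum_{r=1}^{R} X_{u,r}^{(v)}.
\]

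For part (a), a single walk from $u$ terminates exactly after the $t$-th step with probability $\eps(1-\eps)^t$, and conditionally the walk is located at $v$ with probability $[P^t]_{u,v}$; hence $\pr{X_{u,r}^{(v)} = 1} = \sum_{t \geq 0}\eps(1-\eps)^t [P^t]_{u,v}$. Unrolling the PageRank identity $\pi^T = (\eps/n)\allones^T + (1-\eps)\pi^T P$ gives $\pi_v = \tfrac{1}{n}\sum_u \sum_{t\geq 0}\eps(1-\eps)^t [P^t]_{u,v}$, so summing the Bernoulli probabilities over $u$ and dividing by $n$ yields $\ex{\tilde\pi_v} = \pi_v$.

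For part (b), fix $v$ and set $S = nR\tilde\pi_v$ with $\mu := \ex{S} = nR\pi_v$. The crucial quantitative ingredient is the universal lower bound $\pi_v \geq \eps/n$ for every vertex, which is immediate from the PageRank identity (the uniform jump forces at least $\eps/n$ mass at every vertex, independent of the graph). Hence $\mu \geq R\eps = \Theta(\log n/\alpha^2)$, and a standard multiplicative Chernoff bound for sums of independent Bernoullis yields $\pr{|S-\mu| > \alpha\mu} \leq 2\exp(-\alpha^2 \mu/3) \leq n^{-C}$ for any desired $C$ (by choosing the constant in $R$). A union bound over the $n$ vertices delivers the simultaneous $(1\pm\alpha)$ guarantee.

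The only nontrivial step is concentration for vertices of very small PageRank: the multiplicative Chernoff bound is only strong when $\mu$ is $\Omega(\log n)$, and this in turn relies crucially on the floor $\pi_v \geq \eps/n$ coming from the uniform jump component of the chain. Everything else is essentially bookkeeping.
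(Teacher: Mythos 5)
Your proof analyzes a different estimator than the one the algorithm actually computes, and this matters for the concentration step. In \cref{alg:PageRank-approximation}, $\bX_v$ is ``the number of times the walks from $W$ \emph{visit} $v$'' — every occurrence of $v$ along every walk is counted, not just terminal vertices — and $\tilde\pi_v = \eps\bX_v / (nR)$. You instead define $X_{u,r}^{(v)}$ as the indicator that the $r$-th walk from $u$ \emph{terminates} at $v$ and set $\tilde\pi_v = \frac{1}{nR}\sum_{u,r}X_{u,r}^{(v)}$. (The normalization by $|W|/\eps$ in the algorithm is the tell: with the ``all visits'' interpretation each walk contributes $1/\eps$ visits on average, so $\sum_v\tilde\pi_v\approx 1$; with ``terminal only'' the normalization would be $|W|$.)

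Both estimators are unbiased, and your Neumann-series computation adapts: for all visits one uses $\pr{L\geq t}=(1-\eps)^t$ rather than $\pr{L=t}=\eps(1-\eps)^t$, and the extra factor of $\eps$ in the algorithm's normalization exactly compensates, so part~(a) still comes out to $\ex{\tilde\pi_v}=\pi_v$. The real gap is in part~(b). Your $\tilde\pi_v$ is a sum of $nR$ independent Bernoullis and multiplicative Chernoff applies directly. The paper's $\bX_v$ is a sum over walks of $Y_w := \sum_{t=0}^{L_w}\mathbf{1}[w_t=v]$; the summands $Y_w$ are independent across walks, but each one is \emph{not} $\{0,1\}$-valued (a walk may revisit $v$) and within a single walk the visit indicators at different times are dependent. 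The standard Bernoulli Chernoff bound therefore does not apply as stated. The fix — which you would need to supply — is to bound $Y_w\leq L_w = O(\log n/\eps)$ with high probability and invoke a concentration inequality for sums of independent bounded (or sub-exponential) random variables such as Bernstein/Hoeffding, paying at most a $\polylog n$ loss that is absorbed by the choice of $R$. The universal floor $\pi_v\geq\eps/n$ you identify is correct and is indeed the crucial ingredient ensuring $\mu=\Omega(\log n/\alpha^2)$ under either interpretation. So the high-level plan is sound, but as written the proof establishes the proposition for a modified algorithm, not the one in the paper.
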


\begin{algorithm}[t]
\begin{algorithmic}[1]
	\STATE{Sample a set $W$ of $R = \left \lceil \frac{9\ln n }{\eps \alpha^2} \right \rceil$ random walks starting from each vertex of $G$. \\
	Each walk length is chosen from geometric distribution with parameter $1-\epsilon$.}
    \STATE{Remove from $W$ all walks longer than $\ell$.}
\FOR{$v\in V$}
    \STATE{$\bX_v \gets $ the number of times the walks from $W$ visit $v$.}
    \STATE{$\tilde{\pi}(v) \gets \frac{\bX_v}{|W| / \epsilon}$.\label{line:tpi-assign}}
\ENDFOR
\STATE{Return $\tilde{\pi}$}
\end{algorithmic}
\caption{An algorithm for computing approximate PageRank using random walks. \textbf{Input:} A graph $G$, and parameters $\eps$, $\alpha$ and $\ell$.}
\label{alg:PageRank-approximation}
\end{algorithm}

\section{Lower Bounds}
\label{sec:lower-bound-insertion-only}

In this section we present our lower bounds for maintaining explicit approximation to PageRank and for the running time of the dynamic ForwardPush algorithm~\cite{zhang2016approximate}.
We now describe a generic construction of a hard instance, which we instantiate with different parameters in each of the individual lower bounds.
Throughout the section, we consider the case of $\epsilon \in (0.01, 0.99)$, which is the usual case in the applications of PageRank.

\paragraph{The Graph.}
The graph $G$ is a union of the graphs $H,R,S_0,S_1$. First, $H$ is a directed tree.
Each non-leaf vertex $v$ has exactly $t$ children, with $p^i$ parallel directed edges from $v$ to the $i$-th child of $v$ (where $i$ is $0$-based).
We require that $p \geq \max(1/\epsilon, 2)$. Hence, the total-out degree of each internal vertex in $H$ is $O(p^t)$.
The depth of $H$ is $d$, and so $H$ has $\Theta(t^d)$ vertices and $\Theta(t^d \cdot p^t)$ edges.

The graph $R$ consists of $n/4$ vertices $v$, each with no in-edges, and each with a single out edge $vr$ where $r$ is the root of the directed tree $H$. Finally, the sets $S_0,S_1$ are both directed star graphs on $s+1$ vertices (with the edges directed away from the center of the star), where $S_i$ has the center $c_i$ for $i \in \{0,1 \}$. 
Additionally, each leaf of $S_0$ and $S_1$ has a single outgoing edge, which is a self-loop.
We then order the leaf vertices of $H$ as $\ell_1,\ell_2,\dots$, and create a directed edge from $\ell_i$ to $c_{i \bmod 2}$.
We will set the parameters, such that the total number of vertices in $H$, $R$, $S_0$, and $S_1$ is less than $n$.
One can then add an additional $O(n)$ isolated vertices (with self-loops), so that the total number of vertices is precisely $n$.

\paragraph{Update Sequence} The initial graph has all vertices and edges of $H, R, S_0, S_1$, except that each non-leaf vertex of $H$ only has an edge to its leftmost child (i.e., one with index $0$).
Observe that each vertex has at least one outgoing edge, and so PageRank is well-defined.

The update sequence is as follows. Let $v_1,\dots,v_{|H|}$ be the sequence of vertices visited on a pre-order traversal of $H$, such that $\ell_1,\ell_2,\dots$ is a subsequence of $v_1,\dots,v_{|H|}$.
We insert the edges of $H$ in $|H|$ rounds: in the $i$th round we insert all incoming edges of $v_i$ (unless they have already been in the graph from  the beginning).

To prove the lower bounds, we use the following way of interpreting PageRank, which is a continuous version of~\cref{alg:PageRank-approximation} and follows from \cref{prop:walksEstimate}.
Each vertex has some \emph{probability mass}, which it either generates or receives from its in-neighbors.
Specifically, each vertex of the graph generates a probability mass of $1/n$.
A $1-\epsilon$ fraction of the probability mass of a vertex $v$ (either generated by $v$ or incoming to $v$ from other vertices) is divided uniformly among the outgoing edges of $v$ and sent to the neighbors of $v$.
The PageRank of each vertex is exactly $\epsilon$ fraction of its probability mass.

Note that if a vertex is on a cycle, some probability mass enters it multiple times.
In this case, each time the mass enters the vertex, it increases the total probability mass.
In particular, we have the following.

\begin{observation}\label{obs:loop}
Let $v$ be a vertex, whose only outgoing edge is a self loop.
Assume that $v$ receives a probability mass of $p$ along its incoming edges other than the self-loop.
Then, the PageRank of $v$ is $p + 1/n$.
\end{observation}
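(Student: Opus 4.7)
The plan is to set up a local balance equation at $v$ based on the ``probability mass'' interpretation of PageRank given immediately before the observation, and then solve a simple fixed-point equation.

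First, I would let $M_v$ denote the total probability mass accumulated at $v$ in the sense defined in the preceding paragraph, where each time mass re-enters $v$ through the self-loop it is counted afresh (so $M_v$ is, a priori, the sum of a geometric series). By the description, a $(1-\epsilon)$ fraction of $M_v$ is distributed uniformly among the outgoing edges of $v$; since $v$'s only outgoing edge is the self-loop, \emph{all} of $(1-\epsilon)M_v$ is sent back to $v$ via that self-loop. Collecting the three contributions to $M_v$, namely the $1/n$ mass generated at $v$ itself, the $p$ mass arriving along incoming edges other than the self-loop, and the $(1-\epsilon)M_v$ mass arriving along the self-loop, gives
\[
M_v \;=\; \tfrac{1}{n} + p + (1-\epsilon)M_v.
\]
Solving this yields $\epsilon M_v = p + 1/n$, i.e.\ $M_v = (p + 1/n)/\epsilon$, and in particular the geometric series implicit in this definition converges for $\epsilon > 0$.

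Finally, I would invoke the stated rule that the PageRank of a vertex equals an $\epsilon$-fraction of its total probability mass, giving $\pi_v = \epsilon M_v = p + 1/n$, as required.

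The main ``obstacle'' is purely a matter of care rather than depth: one must be explicit that the mass $p$ received on non-self-loop in-edges does not itself depend on $M_v$ (since by assumption $v$ sends nothing back out except via the self-loop), so that the equation for $M_v$ really is a scalar fixed-point equation rather than a recursion coupled to the rest of the graph. Once that is spelled out, the rest is immediate from the local-mass description of PageRank, and no further structural assumption on the graph is needed beyond the hypothesis that the self-loop is $v$'s unique outgoing edge.
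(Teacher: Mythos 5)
Your proof is correct, and it is the intended argument: the paper states \cref{obs:loop} without proof, treating it as an immediate consequence of the probability-mass interpretation introduced just above it. Your fixed-point equation $M_v = 1/n + p + (1-\epsilon)M_v$, together with $\pi_v = \epsilon M_v$, is exactly the calculation that the word ``Observation'' elides, and your remark that $p$ does not feed back on $M_v$ (because $v$ has no out-edges other than the self-loop) correctly identifies why this is a scalar equation rather than part of a global system.
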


\begin{lemma}\label{lem:pagerank-leaf}
Consider the graph $G^\tau$ obtained right after inserting all edges on the path from $R$ to $\ell_i$.
Let $m_i$ be the probability mass that reaches $\ell_i$ from $R$ in $G^\tau$.
Then $m_i \geq (1-\epsilon)^{2d+2}/4$.

Moreover, out of the probability mass that reaches the leaves of $H$ from $R$, at least $(1-1/p)^d$ fraction reaches $\ell_i$.
\end{lemma}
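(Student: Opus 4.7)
The plan is to track the flow of probability mass originating in $R$ as it propagates through the path $r = u_0, u_1, \ldots, u_d = \ell_i$ in $G^\tau$, using the continuous mass-propagation interpretation given just before the lemma. Since each $v \in R$ has no in-edges and a single out-edge to $r$, it contributes $(1-\epsilon)/n$ to $r$; summing over the $n/4$ vertices of $R$ gives $M_0 := (1-\epsilon)/4$ mass from $R$ arriving at $r$.

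Next I analyze the out-degree of each ancestor $u_k$ of $\ell_i$ in $G^\tau$. Because edges are inserted in pre-order and $\ell_i$ has just been processed, children of $u_k$ with index $0, 1, \ldots, c_k$ (where $c_k$ denotes the index of $u_{k+1}$ among $u_k$'s children) already have all their incoming edges from $u_k$, while children of larger index are not yet connected to $u_k$. Consequently the out-degree of $u_k$ in $G^\tau$ equals $\sum_{j=0}^{c_k} p^j$, of which $p^{c_k}$ parallel edges lead to $u_{k+1}$. Setting $q_k := p^{c_k}/\sum_{j=0}^{c_k} p^j$, the geometric-series bound $\sum_{j=0}^{c_k} p^j \leq \frac{p}{p-1} p^{c_k}$ gives $q_k \geq 1 - 1/p$, and $p \geq 1/\epsilon$ further yields $q_k \geq 1-\epsilon$. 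Since $u_k$ has no in-edges from $R$-reachable vertices other than from $u_{k-1}$, the recurrence $M_{k+1} = (1-\epsilon) M_k q_k$ on the $R$-sourced mass $M_k$ at $u_k$ gives
\[
m_i = M_d = \frac{1-\epsilon}{4}\cdot (1-\epsilon)^d \prod_{k=0}^{d-1} q_k \geq \frac{(1-\epsilon)^{2d+1}}{4} \geq \frac{(1-\epsilon)^{2d+2}}{4},
\]
establishing the first bound.

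For the second claim, I decompose the total mass $M$ reaching the leaves of $H$ from $R$ as $m_i$ plus contributions from every ``left subtree'' rooted at a child of some $u_k$ with index $j < c_k$. By pre-order, such left subtrees are fully present in $G^\tau$, so mass entering the root of one of depth $d-k-1$ distributes over its leaves with total loss factor $(1-\epsilon)^{d-k-1}$. A short calculation gives the left-subtree contribution at $u_k$ as $L_k = M_0 (1-\epsilon)^d (1-q_k) \prod_{j<k} q_j$, and writing $P_k := \prod_{j<k} q_j$ the sum telescopes:
\[
M = M_0 (1-\epsilon)^d \Bigl[P_d + \sum_{k=0}^{d-1}(P_k - P_{k+1})\Bigr] = M_0 (1-\epsilon)^d.
\]
Hence $m_i / M = \prod_{k=0}^{d-1} q_k \geq (1-1/p)^d$, as desired.

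The main obstacle I anticipate is the careful identification of exactly which edges lie in $G^\tau$---in particular that right subtrees of ancestors of $\ell_i$ are completely decoupled from the $R$-mass flow while left subtrees are intact---together with verifying that no $R$-sourced mass re-enters the main path via $S_0$ or $S_1$. The latter is immediate since all edges out of $H$'s leaves terminate inside the star components, whose leaves only carry self-loops, so no edges return to $H$ or $R$ and the $R$-sourced propagation is effectively acyclic, which is what validates the per-branch bookkeeping above.
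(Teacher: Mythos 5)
Your proof is correct and follows essentially the same route as the paper: track the $R$-sourced mass along the path $r=u_0,\dots,u_d=\ell_i$, lose a $(1-\epsilon)$ factor per hop plus a $q_k \geq 1-1/p \geq 1-\epsilon$ factor for staying on the rightmost currently-present child, and sum the $1/n$ contributions over the $n/4$ vertices of $R$. For the second claim the paper merely asserts it ``follows directly'' from the per-level $1-1/p$ bound; your telescoping computation showing the total leaf-reaching mass is exactly $M_0(1-\epsilon)^d$ is a more careful justification of that same claim and correctly handles the decomposition into fully-built left subtrees, noting (as needed) that right subtrees receive no $R$-mass and that the stars $S_0,S_1$ cannot recirculate mass back into $H$.
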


\begin{proof}
Observe that a path from any vertex $u \in R$ to $\ell_i$ first follows the edge to $r$, which is the only outgoing edge of $u$, and then, thanks to the order of adding edges of $H$, at each step uses the rightmost edge of each vertex in $H$.
Consider an internal vertex $w \in H$.
By the construction it has $p^i$ edges to the $i$th child (0-based).
Assuming that we have added edges to $j$ children so far, we have that there are $p^{j-1}$ edges to the rightmost child and so the fraction of outgoing edges of $w$ that go to the rightmost child is:
\begin{equation}\label{eq:rightchild}
p^{j-1} / \left (\sum_{k=0}^{j-1} p^k\right) = p^{j-1} \cdot \frac{p-1}{p^j - 1} \geq p^{j-1} \cdot \frac{p-1}{p^j} = 1-1/p.
\end{equation}
The path from $w$ to $\ell_i$ has $d+1$ edges.
At each step $1-\epsilon$ fraction of the probability mass is forwarded to the children, out of which, as shown above, at least $1 - 1/p \geq 1-\epsilon$ fraction follows the path to $\ell_i$.
Hence, the fraction of probability mass that reaches $\ell_i$ from $w$ is $(1-\epsilon)^{2d+2}$.
Since vertices of $R$ generate a total probability mass of $1/4$, we get the desired.

The second claim follows directly from \cref{eq:rightchild} and the fact that $H$ has depth $d$.
\end{proof}

\subsection{Lower Bound for Maintaining Additive Approximation}
\label{sec:LB-additive-approx}

We first show the following auxiliary lemma which we will use to argue when an additive $\alpha$-approximate PageRank vectors must be updated in linear time.

\begin{lemma}\label{lem:coordinates}
Consider four vectors $v^1, \tilde{v}^1, v^2, \tilde{v}^2 \in \mathbb{R}^n$, such that $\|v^1 - \tilde{v}^1\|_1 \leq \alpha$,
$\|v^2- \tilde{v}^2\|_1 \leq \alpha$ and $v^1$ and $v^2$ differ by at least $100 \cdot \alpha / n$ on at least $n/4$ coordinates.
Then $\tilde{v}^1$ and $\tilde{v}^2$ differ on $\Omega(n)$ coordinates.
\end{lemma}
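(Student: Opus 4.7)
The plan is to argue by contrapositive: assume $\tilde{v}^1$ and $\tilde{v}^2$ agree on all but a small set of coordinates, and then use the $\ell_1$-closeness hypotheses to show that $v^1$ and $v^2$ cannot differ substantially on too many coordinates either. The core observation is that on any coordinate $i$ where $\tilde{v}^1_i = \tilde{v}^2_i$, the triangle inequality yields
\[
|v^1_i - v^2_i| \le |v^1_i - \tilde{v}^1_i| + |\tilde{v}^2_i - v^2_i|,
\]
so summing this bound over coordinates of agreement controls how much $v^1$ and $v^2$ can differ on that set in $\ell_1$.

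Concretely, I would let $A \subseteq [n]$ be the set of coordinates on which $\tilde{v}^1$ and $\tilde{v}^2$ agree, and write $k := n - |A|$ for the number of disagreement coordinates (the quantity to be lower-bounded). Summing the triangle inequality above over $i \in A$ and applying the two $\ell_1$ hypotheses gives
\[
\sum_{i \in A} |v^1_i - v^2_i| \;\le\; \|v^1 - \tilde{v}^1\|_1 + \|\tilde{v}^2 - v^2\|_1 \;\le\; 2\alpha.
\]

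Next, let $B \subseteq [n]$ be the set of coordinates on which $|v^1_i - v^2_i| \ge 100 \alpha/n$; by hypothesis $|B| \ge n/4$. By inclusion–exclusion, $|A \cap B| \ge |B| - k \ge n/4 - k$, and each element of $A \cap B$ contributes at least $100\alpha/n$ to the sum above. Therefore
\[
(n/4 - k) \cdot \frac{100 \alpha}{n} \;\le\; 2\alpha,
\]
which rearranges to $n/4 - k \le n/50$, i.e.\ $k \ge n/4 - n/50 = 23n/100 = \Omega(n)$, as desired.

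There is no real obstacle here; the argument is a one-line triangle inequality plus a pigeonhole bookkeeping step. The only thing I would double-check is the choice of constants ($100$ and $n/4$ are comfortably large enough to yield a clean $\Omega(n)$ bound), and that the lemma is stated symmetrically so that the same proof covers both $\tilde v^1$ and $\tilde v^2$ without change.
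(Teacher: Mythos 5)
Your proof is correct. It takes a genuinely different (and arguably cleaner) route than the paper. The paper also argues by contradiction, but proceeds pointwise: it uses a Markov-type bound to show that $|v^1_i-\tilde v^1_i|\le 10\alpha/n$ and $|v^2_i-\tilde v^2_i|\le 10\alpha/n$ each hold on at least $0.9n$ coordinates, then intersects these with the agreement set of $\tilde v^1,\tilde v^2$ and the $n/4$ large-gap coordinates, uses a union bound to show the intersection is nonempty, and derives a contradiction from the triangle inequality at a single coordinate in that intersection. You instead sum the triangle inequality over the entire agreement set $A$, bound the resulting sum directly by $2\alpha$ via the two $\ell_1$ hypotheses, and use pigeonhole ($|A\cap B|\ge |B|-k$) to get an explicit quantitative bound $k\ge 23n/100$. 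Your averaging argument avoids the intermediate Markov/threshold step, manages one set intersection instead of four, and yields a sharper explicit constant ($23n/100$ versus the paper's $n/1000$), though both of course suffice for $\Omega(n)$.
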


\begin{proof}
The proof goes by contradiction.
Assume that $\tilde{v}^1$ and $\tilde{v}^2$ differ on at most $n/1000$ coordinates.
Thus, they have at least $0.999 \cdot n$ coordinates in common.
Moreover, $\|v^1 - \tilde{v}^1\|_1 \leq \alpha$ implies that $v^1$, and $\tilde{v}^1$ differ by more than $10 \cdot \alpha / n$ on less than $0.1 \cdot n$ coordinates.
Clearly, a similar property is satisfied by $v^2$, and $\tilde{v}^2$.

Let $I$ be the set of coordinates where
\begin{enumerate}
    \item $\tilde{v}^1$ and $\tilde{v}^2$ are equal (there are at least $0.999 \cdot n$ such coordinates),
    \item $v^1$ and $v^2$ differ by at least $100 \cdot \alpha / n$ (at least $n/4$ such coordinates),
    \item $v^1$ and $\tilde{v}^1$ differ by at most $10 \cdot \alpha / n$ (at least $0.9 \cdot n$ coordinates),
    \item $v^2$ and $\tilde{v}^2$ differ by at most $10 \cdot \alpha / n$ (at least $0.9 \cdot n$ coordinates).
\end{enumerate}

Observe that since the vectors have $n$ coordinates, $I$ is nonempty.
By using first the triangle inequality, and then items 2-4 above, for any coordinate $i \in I$ we have 
\begin{align*}
|\tilde{v}^1_i - \tilde{v}^2_i| & \geq |v^1_i - v^2_i| - |v^1_i - \tilde{v}^1_i| - |v^2_i - \tilde{v}^2_i|\\
& \geq 100 \cdot \alpha / n - 10 \cdot \alpha / n - 10 \cdot \alpha / n\\
& = 80 \cdot \alpha / n.
\end{align*}
which contradicts item 1. The lemma follows.
\end{proof}

\lbadditive*

\begin{proof}
We instantiate our construction using the following parameters.
The number of edges from a vertex to its $i$th child is $(1/\epsilon)^i$ ($p = 1/\epsilon$).
Each vertex of $H$ has $t = 1/2 \cdot \log_{p} n$ children.
The tree $H$ has depth $d = \frac{\log (101\alpha)}{2 \log (1-\epsilon)} - 2\geq 1$.
Note that $d = \Theta(\log (1/\alpha))$.
Finally, $S_0, S_1$ have $s = n/4$ leaves.

Let us now bound the size of the graph.
The number of leaves of $H$ is
\begin{equation}\label{eq:leaves}
t^d = (1/2 \cdot \log_{1/\epsilon} n)^d = \left(\frac{\log n}{ 2 \log {1/\epsilon}}\right)^{\Theta(\log (1/\alpha))} = \log^{\Theta(\log (1/\alpha))} n = (1/\alpha)^{\Theta(\log \log n)},
\end{equation}
where in the third step we use the fact that $\frac{\log n}{ 2 \log {1/\epsilon}} = \log^{\Theta(1)} n$ for sufficiently large $n$.

$R, S_0$ and $S_1$ have $n/4$ vertices each.
By \cref{eq:leaves} and the assumption on $\alpha$, $H$ has $o(n)$ vertices, and so with the additional isolated vertices, the graph has exactly $n$ vertices.

The number of edges incident to $R, S_0$ and $S_1$ is $O(n)$.
The number of children of each internal vertex of $H$ is 
\[
\Theta(p^t) = \Theta(p^{1/2 \cdot \log_p n}) = \Theta(n^{1/2}).
\]
Thus, the total number of edges in $H$ is $(1/\alpha)^{\Theta(\log \log n)}\cdot n^{1/2} = n^{o(1)} \cdot \Theta(n^{1/2})$.
Hence, we conclude that the graph has $n$ vertices and $O(n)$ edges.

Observe that as we add edges, leaves $\ell_1, \ell_2, \ldots$ become reachable from $R$ exactly in the order of their indices.
Fix any leaf $\ell_j$ of $H$.
Denote by $\pi^b$ and $\pi^a$, respectively, the PageRank vectors just before $\ell_j$ is reachable from $R$ and just after all edges on the path from $R$ to $\ell_j$ are added.

We use the interpretation of PageRank based on probability mass.
Before $\ell_j$ is reachable from $R$, it may receive probability mass only from its ancestors in $H$.
Hence,
\[
\pi^b_{\ell_j} \leq (d+1)/n = \Theta(\log(1/\alpha)) / n = o(\log n) / n.
\]

Moreover, since PageRank is a $\epsilon$ fraction of the probability mass entering each vertex, by \cref{lem:pagerank-leaf},
\[
\pi^a_{\ell_j} \geq \epsilon \cdot (1-\epsilon)^{2d+2}/4 = \epsilon \cdot (1-\epsilon)^{\frac{\log (101\alpha)}{\log (1-\epsilon)}-2}/4 = 101\cdot \epsilon \cdot \alpha \cdot (1-\epsilon)^{-2}.
\]

The increase to PageRank of $\ell_j$ is thus at least  $\pi^a_{\ell_j} - \pi^b_{\ell_j} \geq 100\cdot \epsilon \cdot \alpha (1-\epsilon)^{-2}$.
Hence, after the insertion there is at least $100 \alpha (1-\epsilon)^{-2} / 4$
"new" probability mass at $\ell_j$.
Since every two hop path from $j$ leads to a leaf in $S_{j \bmod 2}$, each of these leaves will receive a least
\[
100 \cdot \alpha \cdot (1-\epsilon)^{-2} / 4 \cdot (1-\epsilon)^2 / s = 100 \cdot \alpha / (4s).
\]
new probability mass (since only $(1-\epsilon)$  fraction of the probability mass is transferred along each hop).
By \cref{obs:loop} all of that probability mass ends up increasing the PageRank of the leaf.
Therefore the PageRank of each of these $s$ leaves increases by $100 \cdot \alpha / (4s) = 100 \cdot \alpha / (4 \cdot n / 4) = 100 \cdot \alpha / n$.

We now use \cref{lem:coordinates} with $v_1 = \pi^b$, $v_2 = \pi^a$ and $\tilde{v}_1$ and $\tilde{v}_2$ being any PageRank vectors giving $\alpha$-additive approximation and infer that $\Omega(n)$ coordinates of any approximate PageRank vector must be updated in order to maintain $\alpha$-additive approximation.
This happens for each leaf of $H$, and so by \cref{eq:leaves} the Lemma follows.
\end{proof}

\subsection{Lower Bound for Maintaining Multiplicative Approximation}
\label{sec:LB-multiplicative-approx}

\thmmub*

\begin{proof}
We instantiate our construction as follows.
Each non-leaf vertex $v$ of $H$ has exactly $t =  \delta/2 \log n/ \log \log n$ children, with  $(\log^2 n)^i$ parallel directed edges from $v$ to the $i$-th child of $v$ ($p = \log^2 n)$. It follows that the total outdegree of each internal vertex in $H$ is $O(n^\delta)$. 
The depth of $H$ is set to be $d = \log_t(n^{1-2\delta}) = \Theta(\log n / \log \log n)$, so that $H$ has $n^{1-2\delta}$ vertices, and the total number of edges in $H$ is $O(n^{1-\delta})$. 
Finally, both $S_0$ and $S_1$ have $s = n^{1-2 \delta}$ vertices.

Fix a leaf $\ell_j$ of $H$ and consider the state of the algorithm right after all on the path from the root of $H$ to $\ell_j$ have been added.
By \cref{lem:pagerank-leaf}, the probability mass entering $\ell_j$ is at least.
\[
 (1-\epsilon)^{2d+2}/4 =  (1-\epsilon)^{\Theta(\log n / \log \log n)} = n^{\Theta(-1/\log \log n)}.
\]

Out of this probability mass a constant fraction reaches the leaves of $S_{j \bmod 2}$.
In particular, the PageRank of each such leaf is at least $\epsilon \cdot n^{\Theta(-1/\log \log n)} / n^{1-2\delta} \geq n^{-\delta}$.

Moreover, out of the probability mass from $R$ the fraction that reaches $\ell_j$ is at least
\[
(1-1/p)^d = (1-1/\log^2 n)^{\Theta(\log n / \log \log n)} \geq 1-1/\log n.
\]
out of all probability mass that reaches the leaves of $H$ from $R$.
Observe that compared to this probability mass (which is a constant), the total probability mass generated by all vertices of $H$ is negligible.
As a result, the ratio of probability mass that reaches $S_{j \bmod 2}$ to the probability mass that reaches $S_{(j+1) \bmod 2}$ is
\[
\frac{1 - 1/\log n}{1 / \log n} = \Theta(\log n).
\]
This implies that when we add all edges on a path from $R$ to $\ell_j$, the PageRanks of leaves of $S_{j \bmod 2}$ increase by a factor of $\Theta(\log n)$ and so the PageRank estimates of all these $\Omega(n^{1-2\delta})$ vertices must be changed. 
Since a total of $m = O(n)$ edges are added, and since this occurs once for each of the $\Omega(n^{1-\delta})$ leaf vertices in $H$, we obtain a total of $\Omega(n^{2-3\delta})$ PageRank estimate updates, which is the desired result after rescaling $\delta$ by a constant.
\end{proof}

\subsection{Lower bound for the ForwardPush algorithm}

\begin{theorem}\label{thm:fp}
Consider running the ForwardPush~\cite{zhang2016approximate} algorithm whose error parameter is set to ensure that the algorithm maintains additive $\alpha$ approximation of PageRank.
For any $\delta > 0$, each sufficiently large $n \geq 1$ and $\epsilon \in (0.01, 0.99)$ there exists a graph  on $n$ vertices and a sequence of $O(n)$ edge insertions, such that the algorithm runs in $\Omega(n^{2-\delta})$ time.
\end{theorem}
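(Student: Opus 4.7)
The plan is to reuse the generic hard-instance construction from \cref{sec:lower-bound-insertion-only}, calibrated much as in the proof of \cref{thm:lb-explicit}, and then argue that inserting all edges on the path from $R$ down to a freshly exposed leaf $\ell_j$ forces the ForwardPush algorithm to perform a push at the center $c_{j \bmod 2}$ of the star $S_{j \bmod 2}$, and each such push costs $\Omega(s)$ time because it must scan all $s$ out-edges of that center. With $\Omega(n^{1-2\delta})$ leaves of $H$, this yields $\Omega(n^{2-4\delta})$ total work; rescaling $\delta$ gives the stated bound.

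Concretely, I would instantiate the construction with branching factor $t = \Theta(\log n / \log \log n)$, edge multiplicity $p = \log^2 n$, and depth $d = \Theta(\log n / \log \log n)$, so that $|H| = n^{1-2\delta}$, $H$ contains $O(n^{1-\delta})$ edges, and $S_0, S_1$ each have $s = n^{1-2\delta}$ leaves. The total number of vertices is $n$ and the update sequence has length $O(n)$, matching the hypotheses. ForwardPush with parameter $\tilde\alpha$ maintains an additive $\tilde\alpha \cdot m$-approximation, so to guarantee additive error $\alpha$ on an $n$-vertex, $O(n)$-edge graph its threshold must be set to $\tilde\alpha = \Theta(\alpha/n)$.

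The core calculation is to compare the residual mass change at $c_{j\bmod 2}$ against this threshold. By \cref{lem:pagerank-leaf}, once all edges on the path from $R$ down to $\ell_j$ have been inserted, the probability mass reaching $\ell_j$ from $R$ is at least $(1-\epsilon)^{2d+2}/4 = n^{-\Theta(1/\log\log n)}$; moreover, a $1-O(1/\log n)$ fraction of the mass that has reached the leaves of $H$ ends up at $\ell_j$, and a $(1-\epsilon)$ fraction of the mass at $\ell_j$ is forwarded along the single edge $\ell_j \to c_{j\bmod 2}$. Hence the residual injected into $c_{j\bmod 2}$ between two consecutive leaf-path completions is $\Omega(n^{-\Theta(1/\log\log n)})$, which for any fixed constants $\alpha$ and $\delta > 0$ dominates the per-vertex push threshold $\tilde\alpha \cdot \deg(c_{j\bmod 2}) = \Theta(\alpha \cdot n^{-2\delta})$ once $n$ is sufficiently large. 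Consequently, the ForwardPush invariants force at least one push at $c_{j\bmod 2}$ per leaf of $H$ on that side of the parity, costing $\Omega(s) = \Omega(n^{1-2\delta})$ time each; summing over all $\Omega(n^{1-2\delta})$ leaves gives $\Omega(n^{2-4\delta})$ total time.

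The main obstacle is justifying that dynamic ForwardPush is indeed compelled to push at $c_{j\bmod 2}$: its behavior depends on the exact rule used to update the residual vector after an edge insertion, not directly on static PageRank. I would handle this by tracking the residual vector of \cite{zhang2016approximate} explicitly as edges of $H$ are inserted in the prescribed pre-order, using the multiplicity $p = \log^2 n$ to ensure that essentially all of the mass flowing out of $R$ is routed to the currently active leaf $\ell_j$ (as in \cref{lem:pagerank-leaf}), so that the residual surge at $c_{j \bmod 2}$ between consecutive completions is at least the bound claimed above, and the push invariant of ForwardPush is thereby triggered.
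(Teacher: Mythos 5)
Your overall plan — reuse the generic hard instance with the same parameters as \cref{thm:lb-explicit}, argue that each completed leaf path forces a push at $c_{j\bmod 2}$, and count $\Omega(n^{1-2\delta})$ pushes of cost $\Omega(n^{1-2\delta})$ each — matches the paper's proof. However, the step that actually forces the push is argued differently, and your version of it has a genuine gap.

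You try to show directly that the \emph{residual} injected into $c_{j\bmod 2}$ over one leaf-path completion is $\Omega\!\left(n^{-\Theta(1/\log\log n)}\right)$, invoking \cref{lem:pagerank-leaf}. But that lemma concerns PageRank \emph{probability mass}, not the ForwardPush residual vector; the residual is the un-pushed fraction of mass and is capped at $\gamma\deg(v)$ at every vertex by the push invariant, so there is no a priori reason the residual arriving at $c_{j\bmod 2}$ tracks the PageRank mass that reaches it. You acknowledge this obstacle and propose to "track the residual vector explicitly," but that is precisely the part that is nontrivial: the residual update rule of dynamic ForwardPush is local and interacts with pushes along the way, and your sketch does not establish that residuals propagate down $H$ rather than being absorbed (into estimates) en route. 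As written, the key quantitative claim is unsupported.

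The paper sidesteps this entirely via an indirect correctness argument (its \cref{obs:residuals}): the ForwardPush invariants imply the PageRank estimate at $c_0$ (resp.\ $c_1$) has additive error at most $\sum_{u\in D_{c_0}}|R_u|$, which the push threshold bounds by $\Theta(n^{-\delta})$; since the true PageRank of $c_0$ is $n^{\Theta(-1/\log\log n)}\gg n^{-\delta}$ and changes by a $\Theta(\log n)$ factor when a new even-indexed leaf path completes, the stored estimate must change as well; and the only way ForwardPush ever modifies the estimate at $c_0$ is by executing a push on $c_0$ (its outgoing edges are all present from the start). This argues from the algorithm's output guarantee rather than from the residual dynamics, and it is what your proof is missing. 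If you want to keep your "direct residual-tracking" route you would need to prove, not just assert, that a residual of size exceeding $\gamma\deg(c_{j\bmod 2})$ actually accumulates at $c_{j\bmod 2}$; the cleaner path is to adopt the paper's invariant-based argument.
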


\begin{proof}
We use our construction with the same settings as in the proof of \cref{thm:lb-explicit}.
Specifically, $t =  \delta/2 \log n/ \log \log n$, $p = \log^2 n$, $d = \log_t(n^{1-2\delta}) = \Theta(\log n / \log \log n)$ and $s = n^{1-2 \delta}$.

The ForwardPush algorithm can be explained using the probability mass interpretation.
The algorithm maintains a \emph{residual} on each vertex $u$, denoted by $R_u$.
This residual can be positive or negative.
Initially, the residual of each vertex is $1/n$.

The residual is a probability mass that still has to be pushed to the neighbors of $u$.
The algorithm maintains two invariants
\begin{enumerate}
    \item $|R_u| \le \gamma \deg(u)$ for each vertex $u \in V$, where $\gamma$ is an accuracy parameter.
    \item If we keep pushing the residuals, the PageRank estimates converge to the exact PageRank values.
\end{enumerate}

For any vertex $u$ that violates the invariant, that is satisfies $|R_u| / \deg(u) > \gamma$, the algorithm executes a \emph{push} operation, which takes time $\Theta(\deg(u))$ and pushes a $1-\epsilon$ fraction of the residual to the outneighbors of $u$ and uses a $\epsilon$ fraction of the residual to increase the PageRank of $u$.
The residual of $u$ is then set to $0$.
Upon an insertion of an edge $uv$, the algorithm decreases $R_u$ by $\Delta = \Theta(\pi_u) / \deg(u)$ and increases $R_v$ by $\Delta$.
Then, it restores the invariant by executing push operations.

In the following part of the proof we use the following observation, which follows from the second algorithm invariant.

\begin{observation}\label{obs:residuals}
Fix a vertex $v$ and denote by $D_v$ the set of vertices that have a directed path to $v$. We assume $v \in D_v$.
Then, the total additive error of the PageRank estimate maintained by the ForwardPush algorithm is at most $\sum_{u \in D_v} |R_u|$.
\end{observation}

By using the second algorithm invariant, we get that ForwardPush ensures that the total additive error is $\sum_{u \in V} |R_u| \le \sum_{u \in V} \gamma \deg(u) = \gamma m$. Therefore, to ensure an additive $\alpha$ approximation of PageRank, we set $\gamma m \le \alpha$, implying $\gamma \le \alpha / m$.
We note that it is easy to come up with an example where this analysis is tight up to a constant factor.

We now analyze ForwardPush algorithm on our hard instance.
Since the number of edges in our graph is  $\Theta(n)$, we invoke ForwardPush with the approximation parameter $\gamma = \Theta(1/n)$. 
We claim that with this value of $\gamma$, the residual values are propagated often enough so that over $\Theta(n)$ edge insertions described above, ForwardPush makes $\Omega(n^{2 - \delta})$ updates.

We use the observations from the proof of \cref{thm:lb-explicit} that the PageRank of a vertex $c_i$ ($i \in \{0, 1\}$) is $n^{\Theta(-1/ \log \log n)}$ and, as we add edges, increases by a $\Theta(\log n)$ factor each time we fully add a path from $R$ to a leaf $\ell_j$, such that $i = j \bmod 2$.

We now use \cref{obs:residuals} to show that the ForwardPush maintains a constant factor approximate of the PageRank estimates of $c_0$ and $c_1$.
Indeed, these vertices can only be reached from $R$, $H$ or from themselves.
We now bound the residuals of these vertices.
The residuals of the vertices of $R$ are set to $0$ the moment each of these vertices performs the first push operation and are then never updated.
The residual of each vertex $v$ of $H$ satisfies
$|R_v| / \deg(v) \leq \alpha / m$ which implies $|R_v| \leq \Theta(\deg(v)) / m = \Theta(n^{\delta-1})$.
Finally, the residual of $c_0$ (and, similarly $c_1$) satisfies
$|R_{c_0}| / \deg(c_0) \leq \alpha / m$, which gives
$|R_{c_0}| \leq \Theta(n^{1-2\delta}) / m = \Theta(n^{-2\delta})$.
By applying \cref{obs:residuals} we have that the additive error the PageRank estimates of $c_0$ and $c_1$ is at most
\[
\Theta(n^{\delta-1}) \cdot \Theta(n^{1-2\delta}) + \Theta(n^{-2\delta}) = \Theta(n^{-\delta}).
\]
These additive errors are negligible comared to the PageRanks of these vertices, which is $n^{\Theta(-1 / \log \log n)}$.
Hence, the algorithm maintains constant-factor estimates of the PageRanks of $c_0$ and $c_1$.
As a result, when the exact PageRank values change by a factor of $\Theta(\log n)$, the algorithm updates their estimates.
However, the ForwardPush algorithm only updates a PageRank estimate of a vertex $u$ when either it executes a push operation on $u$ or adds an outgoing edge from $u$.
Since all outgoing edges of $c_0$ and $c_1$ have been added in the beginning, we get that the algorithm executes a push operation on $c_0$ for half of leaves of $H$.
Each such operation takes $\Theta(\deg(c_0)) = \Theta(n^{1-2\delta})$ time and so the overall running time of the algorithm is $\Theta(n^{1-2\delta} \cdot n^{1-2\delta}) = \Theta(n^{2-4\delta})$ which, after tweaking $\delta$ by a constant factor, gives the desired.
\end{proof}

\section{Approximating PageRank by Maintaining Dynamic Random Walks}
\label{sec:random-walks}
In this section we review the algorithm for approximating PageRank by maintaining random walks.
This algorithm is a dynamic version of \cref{alg:PageRank-approximation} and has been previously described by Bahmani et al.~\cite{bahmani2010fast}.
We provide a detailed proof of correctness of the algorithm, which to the best of our knowledge has not been included in any prior work.

The algorithm relies on maintaining $O_\eps(n \log n)$ random walks and re-sampling their parts as necessary.
In this section, we present data structures that we use to efficiently maintain and re-sample those random walks.
\cref{sec:appendix-edge-insertions-undirected} presents our approach on an edge insertion, while \cref{sec:appendix-edge-removals} describes how our algorithms handle edge deletions.
We being by describing the problem setup.

\textbf{Setup.}
Following \cref{prop:walksEstimate}, to approximate the PageRank it suffices to sample $R = O(\log n / (\eps \alpha^2))$ PageRank walks from each vertex. A PageRank walk is a random walk $w$, whose length $\ell_w$ is sampled from geometric distribution with parameter $1 - \eps$.
Even though a given walk may get re-routed after edge insertions or deletions, it is crucial that the the length of each walk remains \textbf{fixed} throughout the entire execution of the algorithm.
Otherwise, it is easy to construct examples where the lengths of the maintained walks no longer follow the right distribution.

We maintain two types of data structures. For each each vertex $v$ and $t = 0 \ldots O(\log n / \eps)$, we maintain a binary search tree $S_{v, t}$ which stores all the walks whose $t$-th vertex if $v$. For each edge $e$, we maintain the binary search tree $W_e$ consisting of the walks passing through $e$.

\subsection{Edge Insertion}
\label{sec:appendix-edge-insertions-undirected}

When an edge $(u, v)$ is inserted, we re-sample some of the walks passing through $u$. This re-sampling is done by first performing rejection sampling on each walk and, second, by choosing an appropriate position where each of the rejected walks should be re-sampled. Choosing an appropriate position from where to re-sample $w$ is trivial in case when $w$ passes through $u$ once. However, it might be the case that $w$ passes through that vertex multiple times, and a more careful consideration is required.
At a high level, we iterate through all segments of $w$ and for each segment of $w$ that leaves $u$ we toss a coin.
Then, with probability $1/d_u$, where $d_u$ is the degree of $u$ after the update, we reroute $w$ starting from the considered segment, and terminate the update procedure for $w$.

Each walk has a unique ID associated with it. These IDs are integers ranging from $1$ through the number of walks we maintain.
Each vertex and each edge keeps track of which walks are passing through them.

Given a vertex $v$ and integers $i$ and $t$, it will be convenient to be able to sample the $i$-th walk whose $t$-th vertex is $v$. It will become clear why such operation is needed when we describe how to handle edge insertions. 
To be able to implement this operation efficiently, we store the IDs of walks whose $t$-th vertex is $v$ in a binary tree; we use $S_{v, t}$ to refer to this binary tree. 
Then, the $i$-th walk can be easily fetched via a search within that tree. 
The maximum value of $t$ to consider is upper-bounded by the maximum length of the walks.

Assume that we insert an edge $e = (u, v)$. Let $d_u$ be the out-degree of $u$ \emph{after} adding $e$. Consider a walk $w$ that at some point got to $u$ and continued to $u$'s neighbors. If $e$ was present in the graph at that point, with probability $1 / d_u$ the walk $w$ would have continued along $e$, and with probability $1 - 1/d_u$ the walk $w$ would have chosen some other neighbor of $u$. However, $w$ was sampled before $e$ was in the graph, and our aim now is to correct this distribution and account for the insertion of $e$. The idea is to use rejection sampling, which we provide as \cref{alg:walk-insertion}.

\begin{algorithm}
\caption{A procedure executed after edge $e = (u, v)$ is inserted.}
\label{alg:walk-insertion}
\begin{algorithmic}[1]
\STATE{$W \gets \emptyset$}
\STATE{Let $\ell$ be the length of longest generated walk.}
\FOR{$t = 1 \ldots \ell$ \label{line:loop-over-lengths}}

    \STATE{Sample each walk from $S_{u, t}$ with probability $1/d_u$ in the following way.
    First, \\select an integer $r_{u, t}$ from the binomial distribution with parameters $|S_{u, t}|$ and\\ $1/d_u$. Second, select $r_{u, t}$ integers uniformly at random and without repetition\\ from $[1, |S_{u, t}|]$. Then, for each of those integers $i$ select the $i$-th walk from $S_{u, t}$.\\ If $e$ is an undirected edge, apply the same steps for $S_{v, t}$. \label{line:select-walks-for-resampling}
    }
    \STATE{For each walk $w$ selected in the last step such that $w \notin W$, add $w$ to $W$ and \\label $w$ by $t$.\label{line:add-w-to-W}}
\ENDFOR

\FOR{each $w \in W$}
    \STATE{Let $j$ be the label remembered for $w$ on Line~\ref{line:add-w-to-W}.}
    \STATE{
        Generate walk $w'$ with the following properties:
    \begin{itemize}
    	\item The walks $w$ and $w'$ have the same length.
    	\item The vertex-prefixes of length $j$ of $w$ and $w'$ are the same.
    	\item After that prefix, if $w$ has more than $j$ vertices, $w'$ walks along $e$.
    	\item The remaining edges of $w'$ are chosen randomly, i.e., the rest of $w'$ is a newly generated random walk.
    \end{itemize}
    }
    
    \STATE{Update the data structures by removing $w$ and inserting $w'$.}
\ENDFOR

\end{algorithmic}
\end{algorithm}

The for-loop on Line~\ref{line:loop-over-lengths} of \cref{alg:walk-insertion} is in an efficient way of selecting walks passing through $u$ and $v$ that need to be re-sampled. 
Since the length of each walk follows a geometirc distribution with parameter $1-\epsilon$, it is easy to see that with high probability the walks have length $O(\log n / \eps)$, and hence $\ell \in O(\log n / \eps)$.

\emph{Remark:}
To the best of our understanding, on an insertion of an edge $(u,v)$, the prior work \cite{zhan2019fast} re-samples a walk passing through $u$ from the first occurrence of $u$ in the walk, if there is any such occurrence (for details, see \cite{zhan2019fast}). Such re-sampling does not account for the case when a walk passes through $u$ multiple times and leads to biases in randomness.

\subsection{Edge Deletions}
\label{sec:appendix-edge-removals}
\cref{alg:walk-removals} presents our procedure executed after deleting an edge.

\begin{algorithm}[t]
\begin{algorithmic}[1]
\STATE{Let $W_e \subseteq W$ be the list of walks passing through $e$.}
\FOR{$w \in W_e$}
	\STATE{Let $w_p$ be the longest prefix of $w$ not containing  $e$.}
	\STATE{Let $w'$ be a walk of length $|w|$ such that $w'$ has $w_p$ as its prefix, and the\\ remainder of $w'$ is a random walk.}
	\STATE{To update $W$, remove $w$ from $W$ and the corresponding data structures, and \\insert $w'$.}
\ENDFOR
\end{algorithmic}
\caption{A procedure executed after edge $e$ is deleted.}
\label{alg:walk-removals}
\end{algorithm}

Let $e$ be a deleted edge, and let $W_e \subseteq W$ be the list of walks passing through $e$. Clearly each $w \in W_e$ needs to be rerouted. 
The following lemma states that $W$ updated by executing \cref{alg:walk-removals} is a set of independent random walks.

\begin{lemma}
\label{lemma:removals-maintain-random-walks}
	Let $W$ be the set of walks that our algorithm maintains. Assume that $e$ gets deleted, and let $W'$ be the updated list of walks as described in \cref{alg:walk-removals}. 
	If $W$ consists of random walks sampled independently, then $W'$ is also a set of random walks sampled independently.
\end{lemma}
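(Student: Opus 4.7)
The plan is to establish two properties of $W'$ produced by \cref{alg:walk-removals}: (i) each $w' \in W'$ has the marginal distribution of a random walk in the post-deletion graph $G' = G \setminus \{e\}$ starting from the same source vertex and of the same length as the corresponding $w \in W$; and (ii) the walks in $W'$ are mutually independent. Property (ii) is immediate: the algorithm processes each $w \in W$ separately, the re-sampling of each $w'$ uses fresh randomness independent of everything else, and the original walks in $W$ are assumed independent. The substance of the proof therefore lies in (i).

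I would prove (i) by a coupling argument. Fix a walk $w \in W$ together with its source $v_0$ and its fixed length $\ell$, and write $e = (u, v)$. Construct a coupling of a $G$-random walk $W$ from $v_0$ and a $G'$-random walk $W''$ from $v_0$ as follows: while both walks sit at a common vertex $x \neq u$, use the same uniform next-neighbor sample (valid since $d_G(x) = d_{G'}(x)$); while both walks sit at $u$, first draw $y$ uniformly from the $d(u)$ out-neighbors of $u$ in $G$ and set $W$'s next vertex to $y$, then, if $y \neq v$, reuse the same $y$ for $W''$, and if $y = v$, draw an independent uniform out-neighbor of $u$ in $G'$ for $W''$; once the two walks differ, let them evolve independently. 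Under this coupling, $W$ and $W''$ agree on the prefix $(w_0, \ldots, w_T)$, where $T$ is the first step at which $W$ uses $e$ (with $T = \ell$ and $W'' = W$ if $W$ never uses $e$); beyond step $T$, the walk $W''$ continues as an independent $G'$-random walk from $u = w_T$. This is exactly the operation performed by \cref{alg:walk-removals}, which keeps the longest prefix of $w$ avoiding $e$ and appends a fresh random walk of the appropriate remaining length. Hence the conditional law of $w'$ given $w$ under the algorithm matches the conditional law of $W''$ given $W$ under the coupling, so $w'$ inherits the marginal law of $W''$: a $G'$-random walk of length $\ell$ from $v_0$.

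The main subtlety, and the step I would flag as requiring careful verification, is that the proposed coupling is genuine, i.e.\ that $W''$'s marginal is indeed the $G'$-random-walk law. This reduces to the elementary check that at each shared visit to $u$, the vertex assigned to $W''$ is uniform over the $d(u) - 1$ out-neighbors of $u$ in $G'$: conditional on $y = v$ (probability $1/d(u)$) it is uniform on $G'$-neighbors of $u$ by construction, and conditional on $y \neq v$ (probability $(d(u) - 1)/d(u)$) it equals $y$, which is itself uniform over the $d(u) - 1$ $G'$-neighbors. A purely computational alternative is to sum the probability that the algorithm produces a given target walk $w^*$ (not using $e$) over the possible reset positions $T$, which reduces the claim to the algebraic identity $a^k + b \sum_{s=1}^{k} a^s b^{k-s} = b^k$ with $a = 1/d(u)$ and $b = 1/(d(u) - 1)$, which holds because $a(1 + b) = b$.
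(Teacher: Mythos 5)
Your proof is correct, and it takes a genuinely different route from the paper's. The paper verifies the Markov property of the re-routed walk $w'$ one step at a time: it writes $\prob{w_i' = u' \mid w_{1 \ldots i-1}'}$ as a mixture over the events $\{e \in w_{1 \ldots i}\}$ and $\{e \notin w_{1 \ldots i}\}$, handles the resampled case directly, and in the unresampled case splits further into $w_{i-1} \notin e$ and $w_{i-1} \in e$, using Bayes' rule to correct for the removed edge in the latter sub-case. Your argument instead works at the level of entire trajectories: you build an explicit coupling $(W, W'')$ of a $G$-walk and a $G'$-walk of the same length and source that agree until the first time $W$ traverses $e$, observe that the conditional law of $W''$ given $W = w$ under this coupling is exactly what \cref{alg:walk-removals} produces from $w$, and conclude that $w'$ inherits the $G'$-walk law from $W''$. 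The coupling buys you a one-shot argument that cleanly localizes all the probabilistic work to the single check that your rule at $u$ produces a uniform $G'$-neighbor (which you verify both by the case split on $y = v$ versus $y \neq v$ and by the algebraic identity $a^k + b\sum_{j=1}^k a^j b^{k-j} = b^k$, $a(1+b)=b$, summed over the possible reset position — both checks are correct). The paper's step-by-step computation is more explicit about the conditional-independence structure and makes the role of the oblivious adversary visible (it flags where the argument would break if $e$ could depend on the walk's randomness), which your write-up leaves implicit.

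One small gap to close before this could replace the paper's proof: you state the coupling for a directed edge $e = (u,v)$ and argue $d_G(x) = d_{G'}(x)$ for all $x \neq u$. The lemma is also invoked in the undirected, fully-dynamic setting of \cref{thm:undirected}, where deleting $e = \{u,v\}$ changes the degree of \emph{both} endpoints and the walk can traverse $e$ in either direction. Your coupling extends verbatim if you treat $v$ symmetrically (diverge whenever $W$ moves along $e$ in either direction, drawing a fresh $G'$-neighbor of whichever endpoint the walk is at), but as written the case $x = v$ and the traversal $v \to u$ are not covered. The paper's proof handles this implicitly via the case $w_{i-1} \in e$, which covers both endpoints at once.
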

\begin{proof}
	The edges of walks throughout the algorithm are sampled independently of each other, so walks are independent by construction. 
	We focus on showing how deletion of an edge affects randomness of a single walk.

	Consider a walk $w \in W$ originating at vertex $w_1$. Let $w_i$ be the $i$-th vertex of $w$, $w_{1 \ldots i}$ be the prefix of length $i$ of $w$, and $k$ be the length of $w$.
	Walk $w$ is random iff for each $i \ge 2$ and each $u \in N(w_{i - 1})$ it holds
	\begin{equation}\label{eq:del-assumption}
		\prob{w_i = u \ |\ w_{1 \ldots i-1}} = \frac{1}{d(w_{i - 1})}.
	\end{equation}
	Let $w'$ be the updated walk $w$, $d'(v)$ be the updated degree of vertices after $e$ gets deleted and $u'$ be a neighbor of $w_{i - 1}'$ after deletion of $e$. 
	Note: we are \textbf{not} assuming that $w$ contains $e$, so it might be the case that $w = w'$.
	We want to show that $\prob{w_i' = u' \ |\ w_{1 \ldots i-1}'} = 1 / d'(w_{i - 1}')$. 
	We have
	\begin{align}
		& \prob{w_i' = u' \ |\ w_{1 \ldots i - 1}'} \label{eq:random-walks-del} \\
		= &  \prob{w_i' = u' \ |\ w_{1 \ldots i - 1}', e \in w_{1 \ldots i}} \cdot \prob{e \in w_{1 \ldots i}} \label{eq:random-walks-del-term-e-in} \\
		& + \prob{w_i' = u' \ |\ w_{1 \ldots i - 1}', e \notin w_{1 \ldots i}} \cdot \prob{e \notin w_{1 \ldots i}}. \label{eq:random-walks-del-term-e-notin}
	\end{align}
	
	\textbf{Analyzing \eqref{eq:random-walks-del-term-e-in}.}
	We first handle \eqref{eq:random-walks-del-term-e-in}. Recall that $w'$ is constructed by keeping only the prefix of $w$ up to the first occurrence of $e$, and the rest of the walk of $w'$ is random and independent of any other state of the algorithm (see \cref{alg:walk-removals}). Hence, we have
	\[
		\prob{w_i' = u' \ |\ w_{1 \ldots i - 1}', e \in w_{1 \ldots i}} = \frac{1}{d'(w_{i - 1}')}.
	\]

	\textbf{Analyzing \eqref{eq:random-walks-del-term-e-notin}.}
	Now consider term \eqref{eq:random-walks-del-term-e-notin}. If $w_{1 \ldots i}$ does not contain $e$, then $w_{1 \ldots i}' = w_{1 \ldots i}$ and we have
	\begin{align*}
		& \prob{w_i' = u' \ |\ w_{1 \ldots i - 1}', e \notin w_{1 \ldots i}} \\
		= & \prob{w_i = u' \ |\ w_{1 \ldots i - 1}, e \notin w_{1 \ldots i - 1}, e \neq \{w_{i - 1}, w_i\}}.
	\end{align*}

	There are two cases:
	\begin{enumerate}[(a)]
		\item Case $w_{i - 1} \notin e$: from \eqref{eq:del-assumption} we have
	\begin{align*}
		& \prob{w_i = u' \ |\ w_{1 \ldots i - 1}, e \notin w_{1 \ldots i - 1}, e \neq \{w_{i - 1}, w_i\}, w_{i - 1} \notin e} \\
		= & \prob{w_i = u' \ |\ w_{1 \ldots i - 1}, e \neq \{w_{i - 1}, w_i\}, w_{i - 1} \notin e}  \\
		= & \frac{1}{d(w_{i - 1})} = \frac{1}{d'(w_{i - 1})} = \frac{1}{d'(w_{i - 1}')}.
	\end{align*}
	In the last chain of equalities we used that once we condition on $w_{1 \ldots i - 1}$, then \eqref{eq:del-assumption} is a function of only $w_{i - 1}$ and not on any other content of $w_{1 \ldots i-1}$, e.g., whether $e \in w_{1 \ldots i - 1}$ or not.
	\\
	\textbf{Note}: The choice of $e$ is independent of our data structures and the randomness the algorithm uses. However, in the case of non-oblivious adversary, i.e., in case of the adversary who sees the state of our algorithm, the updated edge $e$ could be chosen based on the randomness used to generate $w$, and hence the above sequence of equalities would not hold.
	
	\item Case $w_{i - 1} \in e$: we have the following
		\begin{align*}
			& \prob{w_i = u' \ |\ w_{1 \ldots i - 1}, e \notin w_{1 \ldots i - 1}, e \neq \{w_{i - 1}, w_i\}, w_{i - 1} \in e} \\
			= & \frac{\prob{w_i = u' \wedge e \neq \{w_{i - 1}, w_i\} \ |\ w_{1 \ldots i - 1}, e \notin w_{1 \ldots i - 1}, w_{i - 1} \in e}}{\prob{e \neq \{w_{i - 1}, w_i\} \ |\ w_{1 \ldots i - 1}, e \notin w_{1 \ldots i - 1}, w_{i - 1} \in e}} \\
			= & \frac{1 / d(w_{i - 1})}{(d(w_{i - 1}) - 1) / d(w_{i - 1})} = \frac{1}{d(w_{i - 1}) - 1} = \frac{1}{d'(w_{i - 1}')}.
		\end{align*}
	\end{enumerate}
	
	\textbf{Showing \eqref{eq:del-assumption} for $w'$.}
		The analysis of \eqref{eq:random-walks-del-term-e-in} and \eqref{eq:random-walks-del-term-e-notin} together with \eqref{eq:random-walks-del} implies
		\begin{align*}
            & \prob{w_i' = u' \ |\ w_{1 \ldots i - 1}'} \\
            = & \frac{1}{d'(w_{i - 1}')} \cdot \prob{e \in w_{1 \ldots i}} + \frac{1}{d'(w_{i - 1}')} \cdot \prob{e \notin w_{1 \ldots i}} \\
            = & \frac{1}{d'(w_{i - 1}')}.
		\end{align*}
			
\end{proof}

\subsubsection{Re-sampling Walks from Scratch}\label{sec:introRemark}

We now give a simple example that shows why re-sampling affected walks from scratch after a deletion would not properly maintain random walks.
We note that this approach was suggested as a valid alternative by Bahmani et al.~\cite{bahmani2010fast}.

Consider a path graph on $5$ vertices; let the graph be $1-2-3-4-5$. Consider a random walk $w$ of length $2$ originating at vertex $3$ and visiting vertices $w_1, w_2, w_3$, i.e., $w_1 = 3$. Next, a deletion of $e = \{4, 5\}$ occurs. Let $w'$ be obtained from $w$ as follows: if $w$ contains $e$, then $w'$ is a new random walk of length $2$ originating at $3$; otherwise, $w'$ equals $w$. Now, if we denote the vertices on $w'$ by $w_1', w_2', w_3'$, we have
\begin{align*}
	\prob{w_2' = 4} = & \prob{w_2' = 4\ |\ \{4, 5\} \notin w} \prob{\{4, 5\} \notin w} \\
	& + \prob{w_2' = 4\ |\ \{4, 5\} \in w} \prob{\{4, 5\} \in w} \\
	= & \prob{w_2 = 4\ |\ \{4, 5\} \notin w} \prob{\{4, 5\} \notin w} \\
	& + \prob{w_2' = 4\ |\ \{4, 5\} \in w} \prob{\{4, 5\} \in w} \\
	= & \prob{w_2 = 4 \text{ and } \{4, 5\} \notin w} + \frac{1}{2} \cdot \frac{1}{4} \\
	= & \frac{1}{4} + \frac{1}{8}.
\end{align*}
However, for $w'$ to be random it should hold $\prob{w_2' = 4} = 1/2$.

\section{Near-Optimal Additive Approximation Algorithm}\label{sec:improved-bounds}

In this section, we analyze the algorithm from \cref{sec:random-walks} in the context of  dynamically maintaining \emph{additive} approximation of PageRank.
Namely, we show that when considering the incremental or decremental setting for directed graphs, an $\alpha$ additive PageRank approximation can be maintained in $(1/\alpha)^{O_\eps(\log \log n)}$ amortized update time, even for an adversarially chosen graph and a sequence of edge updates.
Perhaps surprisingly, \cref{thm:LB-additive} shows that, for a constant $\eps$, this running time complexity is essentially tight.

\directedadditiveup*

Our new analysis is based on two ideas.
First, we show that if we \emph{limit the lengths} of walks in \cref{alg:PageRank-approximation} to a constant, we obtain a constant additive approximation of the PageRank vector.
This is thanks to the fact that a constant fraction of all walks have length $O(1/\epsilon)$, and so this truncation only affects a constant factor of the walks.

\begin{lemma}
\label{lemma:constant-length}
    Let $\pi$ be the PageRank of a directed graph $G$. Then, with high probability, \cref{alg:PageRank-approximation} for $\ell = \lceil 2/\eps \cdot \log{(2/(\alpha \eps))} \rceil$ outputs a vector $\piadd$ such that
    $\|\pi - \piadd\|_1 \le  5 \frac{\alpha}{1 - \eps}$.
\end{lemma}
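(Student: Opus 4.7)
The plan is to decompose the error via a triangle inequality against an auxiliary, coupled, un-truncated estimator $\tilde\pi$, namely the vector one would obtain from \cref{alg:PageRank-approximation} if Line~2 were skipped and every walk kept (using the same random walks that produce $\piadd$). Since $\piadd$ differs from $\tilde\pi$ only by zeroing out the contribution of walks of length exceeding $\ell$, the triangle inequality gives
\begin{equation*}
\|\pi - \piadd\|_1 \;\le\; \|\pi - \tilde\pi\|_1 \;+\; \|\tilde\pi - \piadd\|_1.
\end{equation*}

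For the first term I would invoke \cref{prop:walksEstimate}: with high probability $\tilde\pi_v = (1\pm\alpha)\pi_v$ for all $v$ simultaneously, so $|\tilde\pi_v - \pi_v| \le \alpha \pi_v$ and
\begin{equation*}
\|\tilde\pi - \pi\|_1 \;=\; \sum_v |\tilde\pi_v - \pi_v| \;\le\; \alpha \sum_v \pi_v \;=\; \alpha.
\end{equation*}

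For the truncation term, note that $\piadd \le \tilde\pi$ coordinate-wise by coupling, hence
\begin{equation*}
\|\tilde\pi - \piadd\|_1 \;=\; \frac{\eps}{nR} \sum_{w \in W:\, L_w > \ell}(L_w+1),
\end{equation*}
where $L_w$ is the (geometric) sampled length of walk $w$, with $\pr{L_w \ge k} = (1-\eps)^k$. A direct geometric-series computation gives
\begin{equation*}
\ex{(L+1)\mathbf{1}[L > \ell]} \;=\; (1-\eps)^{\ell+1}\Bigl(\tfrac{1}{\eps} + \ell+1\Bigr),
\end{equation*}
so $\ex{\|\tilde\pi - \piadd\|_1} = (1-\eps)^{\ell+1}\bigl(1 + \eps(\ell+1)\bigr)$. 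Substituting $\ell = \lceil (2/\eps)\log(2/(\alpha\eps))\rceil$ bounds $(1-\eps)^{\ell+1} \le e^{-\eps(\ell+1)} \le (\alpha\eps/2)^2$, which shrinks the expected truncation gap to $O(\alpha^2 \log(1/(\alpha\eps)))$, comfortably below $\alpha$.

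To promote this expectation bound to a high-probability statement I would exploit that the truncation gap is a sum of $nR$ independent nonnegative terms. Because individual $L_w$ could in principle be arbitrarily large, I would first apply a union bound on geometric tails to conclude that, with probability $1-1/\poly(n)$, every sampled walk has length at most $L^\star = O(\log n / \eps)$. Conditioned on this event each summand is bounded by $L^\star+1$, so Bernstein's inequality shows the sum lies within a constant factor of its mean with high probability. A final union bound with the event from \cref{prop:walksEstimate} and the triangle inequality above yield $\|\pi - \piadd\|_1 \le \alpha + O(\alpha) \le 5\alpha/(1-\eps)$, where the factor $1/(1-\eps)$ absorbs the lower-order terms and the constant in the concentration step.

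The main obstacle is the high-probability step: a priori the summands $(L_w+1)\mathbf{1}[L_w > \ell]$ have unbounded support, so Chernoff/Bernstein cannot be applied directly. The pre-truncation of walk lengths at $O(\log n/\eps)$ is the key maneuver that brings the analysis into the standard concentration regime; everything else amounts to elementary manipulations of geometric series.
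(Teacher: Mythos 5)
There is a genuine gap: you have overlooked that \cref{alg:PageRank-approximation} normalizes by $|W|/\eps$ on Line~\ref{line:tpi-assign}, where $W$ is the set that \emph{remains after} Line~2 discards the long walks. So $\piadd$ and your auxiliary estimator $\tilde\pi$ are not normalized by the same quantity: $\tilde\pi_v = \eps X^{\text{full}}_v/(nR)$ but $\piadd_v = \eps X^{\text{trunc}}_v/|W|$ with $|W| < nR$. Your claimed coordinate-wise inequality $\piadd \le \tilde\pi$ is therefore false — a vertex visited only by short walks has the same numerator but a strictly smaller denominator after truncation, so its estimate goes \emph{up}. Consequently the identity $\|\tilde\pi - \piadd\|_1 = \frac{\eps}{nR}\sum_{L_w>\ell}(L_w+1)$ does not hold; that is instead the gap between $\tilde\pi$ and a hypothetical estimator that truncates the numerator but keeps the original normalizer $nR/\eps$, which is not what the algorithm outputs.

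The paper is explicit that there are \emph{two} sources of discrepancy between $\piadd$ and $\tilde\pi$ — ``(1) $|W|$, which in turn affects the scaling on Line~\ref{line:tpi-assign}; and (2) the value of $\bX_v$'' — and devotes a separate paragraph to each. Its ``Analysis for $|W|$'' shows $|W| \ge nR(1-\eps)$ w.h.p., and this is exactly where the $1/(1-\eps)$ factor in the stated bound comes from; the paper even spells this out when assembling the final inequality (``The first term is coming from the fact that $\piadd$ is computed by rescaling $\bX_v$ by $|W|/\eps\ge(1-\eps)nR/\eps\dots$''). Your expectation calculation for the discarded visits is essentially the paper's ``Analysis for $\bX_v$'' and is fine as far as it goes, but by itself it bounds the wrong quantity. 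To repair the argument you would need to additionally control the rescaling: either prove a high-probability lower bound on $|W|$ and propagate the $nR/|W|$ factor through both the $\tilde\pi$-vs-$\pi$ error and the discarded-visit error (as the paper does), or show directly that $nR - |W|$ is small enough (it is $O(nR(\alpha\eps)^2)$ in expectation) that the rescaling discrepancy is absorbed into the constant. As written, the step where you dismiss the rescaling as a ``lower-order term absorbed by the constant'' is not justified, because you never established anything about $|W|$.
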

To keep the flow of high-level ideas uninterrupted, the proof of \cref{lemma:constant-length} is given in \cref{sec:proof-of-constant-length}.

The second idea is an observation which bounds the maximum number of times a walk can be affected by adding edges (edge deletions can use a symmetric argument).
To explain the idea let us see what happens when we want to maintain a random outgoing edge $e$ of a vertex undergoing insertions of outgoing edges.
Clearly when we insert the $d$-th outgoing edge we need to update $e$ to be equal to $d$ with probability $1/d$.
By a harmonic sum argument, the expected number of times $e$ needs to be updated in the course if $k$ insertions is only $O(\log k)$.
We generalize this argument to walks of length $\ell$ as follows.

\begin{lemma}\label{lem:reroute}
Let $G$ be a directed graph undergoing edge insertions (or deletions). The 
total number of times a random walk of length $\ell$ is being regenerated
is bounded by $O(\log^{\ell}n)$ in expectation. 
\end{lemma}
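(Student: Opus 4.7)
The plan is to bound the number of regenerations position-by-position along the walk and then sum. Fix a walk $w$ whose successive vertices at any moment are $v_0, v_1, \ldots, v_\ell$ (so $\ell$ is the number of edges), and for $i \in \{0,1,\ldots,\ell-1\}$ let $R_i$ denote the total number of times, over the entire update sequence, that $w$ is rerouted from position $i$—meaning $v_0,\ldots,v_i$ are preserved while $v_{i+1},\ldots,v_\ell$ are re-sampled. Since the total number of regenerations equals $\sum_{i=0}^{\ell-1} R_i$, it suffices to show by induction on $i$ that $\ex{R_i} = O(\log^{i+1} n)$.

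The base case $i=0$ is the one-edge toy argument already sketched just above the lemma. In the insertion setting, inserting the $d$-th outgoing edge of $v_0$ triggers a reroute at position $0$ via an independent Bernoulli$(1/d)$ trial, giving $\ex{R_0} \leq \sum_{d=1}^{D_{v_0}} 1/d = O(\log n)$; the deletion case is symmetric, since each deletion at $v_0$ with current out-degree $d$ happens to match the edge currently used by $w$ with probability $1/d$.

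For the inductive step, I exploit the observation that $v_i$ can change only as a consequence of a reroute at some earlier position $j < i$. Hence the number of distinct vertices ever realized as $v_i$ is bounded by $N_{i-1} := 1 + R_0 + \cdots + R_{i-1}$. For each vertex $u$, consider those updates at $u$ that happen while the current value of $v_i$ equals $u$; each such update triggers a reroute at position $i$ with probability at most $1/d$, where $d$ is the current out-degree of $u$, and these coin flips are fresh and independent of the randomness that shaped $v_i$. Summing the harmonic series yields an $O(\log n)$ expected contribution per distinct value of $v_i$, and combining over all such $u$ gives
\[
\ex{R_i} \le O(\log n) \cdot \ex{N_{i-1}} = O(\log n) \cdot \Bigl(1 + \sum_{j<i} \ex{R_j}\Bigr).
\]
Plugging in $\ex{R_j} = O(\log^{j+1} n)$ for $j<i$ yields $\ex{R_i} = O(\log^{i+1} n)$, and summing over $i \in \{0,\ldots,\ell-1\}$ gives the claimed $O(\log^\ell n)$ bound.

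The main obstacle will be making the independence argument fully precise, since a single update event at a vertex $u$ may interact with a walk that visits $u$ at several positions (as handled by \cref{alg:walk-insertion}, which iterates over all positions and picks the smallest successful one). The cleanest way around this is to associate, with each (event, position) pair in which $v_t = u$ at the time of the event, a separate independent Bernoulli$(1/d)$ coin; the algorithm's rerouting decisions become a deterministic function of these coins, and the probability that position $i$ is ever selected for rerouting by a given event is upper-bounded by the marginal $1/d$. This marginal bound is exactly what the per-position harmonic-series accounting requires, so the recursion above is justified rigorously.
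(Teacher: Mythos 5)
Your proposal is correct and is, at its core, the same harmonic-series-per-level induction that the paper uses; the paper phrases it as a direct recursion on the suffix length ($f(\ell)\le \ln n\cdot f(\ell-1)$, read as: position $0$ is rerouted $O(\log n)$ times, and each time the length-$(\ell-1)$ suffix is fresh and contributes $f(\ell-1)$), whereas you reparametrize by position and bound $\ex{R_i}=O(\log^{i+1}n)$ separately, then sum. These are two organizations of the same idea, but your per-position accounting makes explicit the invariant that the paper leaves implicit — namely, that the number of \emph{resets} of $v_i$ is controlled by $1+R_0+\cdots+R_{i-1}$, and each reset contributes one fresh harmonic sum. Your version is arguably the more careful one on two points the paper passes over silently: (i) it spells out that the same vertex $u$ can recur as $v_i$ across several episodes without the harmonic sum restarting, because the degree of $u$ only increases so later episodes contribute only the tail of $\sum_d 1/d$; and (ii) it isolates the conditional-independence step (fresh Bernoulli$(1/d)$ coins, one per event–position pair, with reroutes being a deterministic function of the coins and the reroute at position $i$ dominated by the marginal of the position-$i$ coin), which is exactly how \cref{alg:walk-insertion} is set up. One small remark: the chain $\ex{R_i}\le O(\log n)\cdot\ex{N_{i-1}}$ goes through cleanly if you write it as $\ex{R_i}\le\sum_u\sum_{e\text{ at }u}\Pr[v_i=u\text{ at time }e]\cdot\frac{1}{d_e}\le\sum_u\Pr[v_i\text{ ever equals }u]\cdot\sum_d\frac{1}{d}=O(\log n)\cdot\ex{\#\text{distinct values of }v_i}$, and the last expectation is $\le\ex{N_{i-1}}$; this avoids any appeal to optional stopping over episode boundaries. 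With that phrasing, the argument is fully rigorous and matches the paper's bound.
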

\begin{proof}
We are going to prove this bound by induction, i.e., let us denote by $f(i)$ the upper bound on expected number of times the walk of length $i$ is regenerated. Consider a random walk $w$ of length $1$ starting in a vertex $v$. Consider insertion of an edge incident to $v$. The probability that $w$ is regenerated at this moment is $1/d_v$. As we consider incremental setting the expected number of times $w$ is regenerated is bounded by 
\[
f(1) = \sum_{i=1}^n \frac{1}{i} \le \ln n. 
\]
Now consider a walk $w$ of length $\ell$ starting at $v$. Similarly as above we can bound the number of changes to $w$ as
\[
f(\ell) = \sum_{i=1}^n \frac{1}{i} \cdot f(\ell-1)\le \ln n \cdot f(\ell-1)=\ln^{\ell}n,
\]
what finishes the proof. Symmetric argument can be applied in the decremental case.
\end{proof}

The above lemma implies that for $\ell = \lceil 2/\eps \cdot \log{(2/(\alpha \eps))} \rceil$ the amortized cost of maintaining each walk is $(1/\alpha)^{O(\log \log n)}$ for a constant $\eps$. 
As we generate $O(n\log n)$ walks in \cref{alg:PageRank-approximation} the total cost of maintaining $5\alpha/(1-\eps)$-approximation in incremental or decremental setting is $O(m + n \cdot (1/\alpha)^{O(\log \log n)})$.

\subsection{Proof of \cref{lemma:constant-length}}
\label{sec:proof-of-constant-length}
Define $\hll = \lceil 2/\eps \cdot \log{(2/(\alpha \eps))} \rceil$. 
Let $\tpi$ be the output of \cref{alg:PageRank-approximation} for $\ell = \infty$, and $\piadd$ the output for $\ell = \hll$. As discussed, it is known, e.g., see \cite{bahmani2010fast,lkacki2020walking}, that $|\pi_v - \tpi_v| \le \alpha \pi_v$. As $\sum_v \pi_v = 1$, this further implies $\|\pi - \tpi\|_1 \le \alpha$.

Next, we compare $\piadd$ and $\tpi$. Difference between these two vectors can be expressed by the following two quantities: (1) $|W|$, which in turn affects the scaling on Line~\ref{line:tpi-assign}; and (2) the value of $\bX_v$, which affects the numerator on Line~\ref{line:tpi-assign}. We analyze both of these quantities.

\textbf{Analysis for $|W|$:}
For $\ell = \hll$, a walk has length at most $\hll$ with probability $\eps \sum_{j = 0}^{\hll} (1-\eps)^j = 1 - (1-\eps)^{\hll + 1} \ge 1 - \eps/2$, where we used that $1-x \le e^{-x}$ for $x \in [0, 1/2]$. Hence, $\ex{|W|} \ge n R (1-\eps/2)$. 
By using a Chernoff bound we can prove that with high probability it holds $|W| \ge n R (1 - \eps)$. The proof proceeds as follows. In the summation above, there are only $\ell$ different values of $j$ that affect $\ex{|W|}$. For a fixed $j$, the contribution to $|W|$ can be expressed as a sum of \textbf{independent} $0/1$ random variables -- a random variable per each of the $nR$ walks, denoting whether the given walk has length length $j$ or not. Hence, for a fixed $j$ we apply the Chernoff bound to show it concentrates well, and then by the union bound over all $\ell$ values of $j$ we get the desired concentration for $|W|$.

\textbf{Analysis for $\bX_v$:}
By definition, $\piadd$ only accounts for the contribution to $\bX_v$ by the appearances of $v$ which are within walks of length at most $\ell$; $\bX_v$ is defined in \cref{alg:PageRank-approximation}. Let $\bY_v$ be the appearances of $v$ for which $\piadd$ does not but $\tpi$ does account for.

Now, we upper-bound $\sum_v \bY_v$:
\begin{align*}
    \ex{\sum_v \bY_v} = & nR (1-\eps)^{\hll+1} \cdot (\hll+1) + \sum_{j = \hll + 2}^{\infty} n R (1-\eps)^j \\
    \le & 2 nR \alpha  + nR (1 - \eps)^{\hll + 2} \sum_{j = 0}^{\infty}(1-\eps)^j \\
    =& 2 nR \alpha + \frac{nR}{\eps}(1 - \eps)^{\ell + 2} \\
    \le & 2 nR \alpha + nR \eps \alpha^2 / 4 \\
    \le & 3 nR \alpha.
\end{align*}
In the derivation above, we used $(1-\eps)^{\hll + 1}(\hll + 1) \le (\alpha \eps / 2)^2 (\hll + 1) \le (\alpha \eps / 2)^2 2 \hll \le 2\alpha$.
To prove that $\sum_v \bY_v \le 4 nR \alpha$ with high probability, it suffices to proceed the same way as for our analysis of $\ex{|W|}$. In the analysis, we need the observation that $\sum_{j >  c \log n / \eps} nR(1-\eps)^j < 1/n$ for a sufficiently large constant $c$. In other words, there are only $O(\log n)$ different values of $j$ that substantially contribute to $\sum_v \bY_v$ and over which is needed to take the union bound.

Our analysis now implies that additive approximation of \cref{alg:PageRank-approximation} for $\ell = \hll$ is with high probability upper-bounded by $\tfrac{\alpha}{1-\eps} + 4 \tfrac{\alpha \eps}{1 - \eps} \le 5 \tfrac{\alpha}{1-\eps}$. The first term is coming from the fact that $\piadd$ is computed by rescaling $\bX_v$ by $|W|/\eps \ge (1-\eps) nR/\eps$ as opposed to rescaling by $nR/\eps$, as it is done when computing $\tpi$. The second term is coming from the fact that the loss between $\tpi$ and $\piadd$ in the numerator of Line~\ref{line:tpi-assign} is at most $4 nR \alpha$ with high probability, which is divided by $|W|/\eps \ge n R(1-\eps)/\eps$.

\section{Efficient Multiplicative Approximation in Undirected Graphs}
\label{sec:upper-bound-undirected}

In this section, we describe how to maintain approximate PageRank of undirected graphs under edge deletions and insertions even if the goal is to maintain a multiplicative approximation.
Our approach takes $\polylog{n}$ time per update and is also based on the algorithm from \cref{sec:random-walks}.

\thmundirected*

Our analysis relies on the following (folklore) claim, which states that the number of the walks passing through an edge is fairly small.
\begin{lemma}[Folklore]\label{lem:load}
    Let $G$ be an undirected graph.
	Consider a set of random walks $W$ of length $\ell < n$ each, such that there are $d_v$ walks originating at vertex $v$. Then, with high probability an edge $e$ is contained in $O(\ell \cdot \log{n})$ of those walks.
\end{lemma}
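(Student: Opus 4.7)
}

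The plan is to exploit the fact that for a simple random walk on an undirected graph, the stationary distribution $\pi$ satisfies $\pi(v) = d_v/(2m)$, and the configuration of starting vertices described in the lemma (namely, exactly $d_v$ walks at $v$) is precisely $2m$ copies of this stationary distribution. Since a walk started from the stationary distribution stays at stationarity at every step, the expected number of walks located at a vertex $w$ at any fixed time $t$ equals exactly $d_w$. I would formalize this by letting $p_t^{(u_0)}(\cdot)$ denote the distribution of a single walk after $t$ steps conditioned on starting at $u_0$, and noting
\[
\sum_{u_0 \in V} d_{u_0} \cdot p_t^{(u_0)}(w) \;=\; 2m \cdot \pi(w) \;=\; d_w,
\]
which follows from $\pi$ being the stationary distribution and hence $\pi = \pi P^t$ for the walk's transition matrix $P$.

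Next, I would compute the expected number of walks that traverse a fixed edge $e = \{u, v\}$ at a fixed step $t$. A walk traverses $e$ at step $t$ either by being at $u$ and moving to $v$ (probability $1/d_u$ of making that move) or symmetrically from $v$ to $u$. By the previous paragraph, the expected number of walks making the $u \to v$ traversal at step $t$ is $d_u \cdot (1/d_u) = 1$, and likewise $1$ for $v \to u$. Summing over the $\ell$ steps, the expected number of traversals of $e$, and hence the expected number of walks containing $e$, is at most $2\ell$.

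To upgrade this to a high-probability bound, I would use the fact that different walks are mutually independent. Let $X_i \in \{0,1\}$ indicate that the $i$-th walk contains $e$, so that the total load $Y_e = \sum_{i=1}^{2m} X_i$ has mean $\mathbb{E}[Y_e] \leq 2\ell$. A standard Chernoff bound for sums of independent Bernoulli variables then yields
\[
\Pr\bigl[Y_e \geq c \, \ell \log n\bigr] \;\leq\; \left(\frac{2e \ell}{c \ell \log n}\right)^{c\ell \log n} \;\leq\; n^{-\Omega(c \ell)}
\]
for any sufficiently large constant $c$. Taking a union bound over all $m \leq n^2$ edges completes the proof.

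The argument is largely mechanical once the stationarity observation is in place, so I do not expect a serious obstacle. The one mildly delicate point is that the starting vertices are chosen \emph{deterministically} (exactly $d_v$ walks at $v$), not by i.i.d. sampling from $\pi$; but this only strengthens the stationarity statement for expectations and does not interfere with independence across walks, since each walk still makes its random choices independently.
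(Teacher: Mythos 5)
Your proof is correct and follows essentially the same approach as the paper: use stationarity of the $d_v$-per-vertex starting configuration to argue that the expected number of traversals of any edge per step is $O(1)$, then invoke independence across walks for Chernoff concentration plus a union bound over edges. The only difference is a minor bookkeeping choice — the paper applies Chernoff to each per-step count $X_{e,i}$ (mean $2$) and unions over the $\ell$ steps, whereas you apply Chernoff once directly to the total count $Y_e$ with mean at most $2\ell$; both routes yield $O(\ell\log n)$.
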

\begin{proof}
	Observe that the number of walks in $W$ originating at each vertex $v$ is proportional to the stationary distribution of $v$. Hence, the number of walks of $W$ whose $i$-th vertex is $v$ in expectation equals $d_v$, for each $1 \le i \le \ell$. Therefore, the number of walks of $W$ whose $i$-th edge is $e = \{u, v\}$ (either as $u \rightarrow v$ or $v \rightarrow v$) in expectation equals $2$, for each $1 \le i \le \ell$.
	
	Let $X_{e, i}$ be the number of walks whose $i$-th edge equals $i$. From our discussion, $\ex{X_{e, i}} = 2$. Also, $X_{e, i}$ is a sum of $0/1$ independent random variables $Y_{v, j, i}$, where $Y_{v, j, i}$ means that the $i$-th edge of the $j$-th walk originating at $v$ equals $e$. Hence, by applying the Chernoff bound, we obtain that with high probability it holds that $X_{e, i} \in O(\log n)$. By taking the union bound over all $1 \le i \le \ell$ and over all the vertices, we prove the desired claim.

\end{proof}
As a direct consequence of \cref{lem:load} we obtain the following claim.
\begin{corollary}\label{corollary:frequency-of-e}
	Consider $n \cdot t$ independent random walks of length $\ell \in O(\log{n} / \eps)$ such that from each vertex there are $t$ walks originating. Then, with high probability an edge $e$ is contained in $O(t \log^2{n} / \eps)$ of those walks.
\end{corollary}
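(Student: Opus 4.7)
The plan is to reduce the corollary to Lemma~\ref{lem:load} by coupling the corollary's experiment with a slightly larger one that matches the lemma's hypothesis. The lemma requires exactly $d_v$ walks to originate at vertex $v$, so it does not apply directly when every vertex contributes a common number $t$ of walks independent of its degree; a short coupling bridges the two setups.

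First I would set up an auxiliary experiment in which, for each vertex $v$, I generate $t \cdot d_v$ independent random walks of length $\ell$ from $v$ and partition them into $t$ labelled groups, each containing exactly $d_v$ walks per vertex. Every individual group satisfies the hypothesis of Lemma~\ref{lem:load}, so the load on any fixed edge $e$ from that group is $O(\ell \log n)$ with high probability. A union bound over the $t$ groups (and, if needed, over all $O(n^2)$ edges, which only changes the hidden constant and remains ``with high probability'') gives total load $O(t \cdot \ell \log n)$ on any edge with high probability.

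Second I would couple this auxiliary experiment to the corollary's experiment. Under the standing PageRank assumption that every vertex has out-degree at least one (self-loops added where necessary, as recorded in \cref{sec:preliminaries}), we have $t d_v \ge t$ for every $v$, so from the $t d_v$ auxiliary walks starting at $v$ one may designate any $t$ of them as the ``corollary walks''; these are independent random walks of length $\ell$ from $v$, exactly matching the corollary's distribution. Since the designated $t$ walks are a sub-multiset of the $t d_v$ auxiliary walks from $v$, the number of them traversing any fixed edge is bounded above by the auxiliary count. Substituting $\ell = O(\log n / \eps)$ into the auxiliary bound gives $O(t \log^2 n / \eps)$, as claimed. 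There is no serious obstacle; the only delicate point is the minimum-degree-one hypothesis, which is exactly what makes the ``pick $t$ out of $t d_v$'' coupling well defined at every vertex.
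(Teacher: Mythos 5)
Your proof is correct and supplies precisely the reasoning the paper leaves implicit when it calls the corollary a ``direct consequence'' of \cref{lem:load}: the coupling via $d_v \ge 1$ (so that $t$ walks per vertex sit inside $t\,d_v$ walks per vertex, i.e.\ $t$ disjoint copies of the lemma's setup) together with a union bound over the $t$ groups. One small caveat worth noting, though it does not affect the paper's use of the corollary where $t = O(\log n /(\eps\alpha^2))$: the union bound over $t$ groups degrades the failure probability by a factor of $t$, so the statement as written tacitly assumes $t$ is at most polynomial in $n$.
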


In \cref{sec:random-walks}, we describe how to update our data structures in $O(\ell \cdot \log n)$ time per an update of an $\ell$-length walk.
Since \cref{alg:PageRank-approximation} runs $t=R=O(\log n / (\epsilon \alpha^2))$ random walks per vertex, by \cref{corollary:frequency-of-e} there are $O(\log^3n/(\eps  \alpha^2))$ walks passing through each edge. Thus by the fact that walks have lengths $O(\log n/\eps)$ with high probability, the dynamic algorithm requires $O(\log^5n/(\eps^2\alpha^2))$
 time for each update, which yields \cref{thm:undirected}.


\bibliographystyle{alpha}
\bibliography{bibliography}

\end{document}